\newtheorem{theorem}{Theorem}
\newtheorem{lemma}[theorem]{Lemma}
\newtheorem{claim}[theorem]{Claim}
\newcounter{pln}
\newcommand{\reals}{\mathbb{R}}
\newcommand{\naturals}{\mathbb{N}}
\newcommand{\cC}{\mathcal{C}}
\newcommand{\wW}{\mathcal{W}}
\newcommand{\pP}{\mathcal{P}}
\newcommand{\qQ}{\mathcal{Q}}
\newcommand{\Wavg}{\overline{W}}
\newcommand{\mblam}{\mb{\lambda}}
\newcommand{\lambdaavg}{\overline{\mblam}}
\newcommand{\UU}{\mathscr{U}}
\newcommand{\VV}{\mathscr{V}}
\newcommand{\UUC}{\mathscr{U}'}
\newcommand{\VVC}{\mathscr{V}'}
\newcommand{\SSS}{\mathscr{S}}
\renewcommand{\to}{\rightarrow}
\newcommand{\sub}{\leftarrow}
\newcommand{\conv}{{\rm conv}}
\newcommand{\sm}{\setminus}
\newcommand{\nin}{\not\in}
\newcommand{\mb}[1]{{\boldsymbol{#1}}}
\newcommand{\me}{\mb{e}}
\newcommand{\mbv}{\mb{v}}
\newcommand{\mbu}{\mb{u}}
\newcommand{\mc}{\mb{\gamma}}
\newcommand{\mcc}{\mb{\beta}}
\newcommand{\sep}{,\xspace}
\newcommand{\resp}{resp.,\xspace}
\newcommand{\ie}{i.e.,\xspace}
\newcommand{\eg}{e.g.,\xspace}
\newcommand{\lexp}[1]{\lfloor\frac{#1}{2}\rfloor}
\newcommand{\ltexp}[1]{\lfloor\tfrac{#1}{2}\rfloor}
\newcommand{\dup}{d+r-1}
\newcommand{\ddo}{d-r+1}
\newcommand{\pddo}{(d-r+1)}
\newcommand{\bx}[1]{\partial{#1}}
\newcommand{\ptn}[1][V]{\mathsf{#1}}
\newcommand{\ptnc}[1][V]{\mathsf{#1}'}
\newcommand{\spansub}[1][\ptn]{${#1}$-spanning subset\xspace}
\newcommand{\spansubs}[1][\ptn]{${#1}$-spanning subsets\xspace}
\newcommand{\range}[2][r]{{#2}_1,{#2}_2,\ldots,{#2}_{#1}}
\newcommand{\ub}[1]{\Phi_{#1}(\range{n})}
\newcommand{\MSsymbol}{+}
\newcommand{\MS}[1][P]{{#1}_1\MSsymbol{#1}_2\MSsymbol\cdots\MSsymbol{#1}_r}
\newcommand{\WMS}[1][P]{\lambda_1{#1}_1\MSsymbol\lambda_2{#1}_2\MSsymbol\cdots\MSsymbol\lambda_{r}{#1}_r}
\newcommand{\csep}{\,\,\,\,\,\,}
\newcommand{\compl}{\bar{S}}
\newcommand{\mddet}{\Delta_{(\kappa_1,\ldots,\kappa_n)}(\tau)}
\newcommand{\md}{\mb{\Delta}_{(\kappa_1,\ldots,\kappa_n)}(\tau)}
\newcommand{\VD}[2][]{\text{VD}(#1\mb{#2})}
\newcommand{\GVD}{\text{GVD}}
\newcommand{\hide}[1]{}
\newcounter{ctrcopy}
\newcommand{\savecounter}[1]{\setcounter{ctrcopy}{#1}}
\newcommand{\restorecounter}[1]{%
  \setcounter{#1}{\thectrcopy}%
  \addtocounter{#1}{-1}}
\begin{document}

% title
\title{Tight lower bounds on the number of faces of the Minkowski sum of
  convex polytopes via the Cayley trick}

% author/paper info
\author{Menelaos I. Karavelas$^{\dagger,\ddagger}$\hfil{}
Eleni Tzanaki$^{\dagger,\ddagger}$\\[5pt]
\it{}$^\dagger$Department of Applied Mathematics,\\
\it{}University of Crete\\
\it{}GR-714 09 Heraklion, Greece\\
{\small\texttt{\{mkaravel,etzanaki\}@tem.uoc.gr}}\\[5pt]
\it{}$^\ddagger$Institute of Applied and Computational Mathematics,\\
\it{}Foundation for Research and Technology - Hellas,\\
\it{}P.O. Box 1385, GR-711 10 Heraklion, Greece}

%\author{Menelaos I. Karavelas}
%\address{Department of Applied Mathematics, University of Crete,
%  GR-714 09 Heraklion, Greece}
%\curraddr{lalaMK}
%\email{mkaravel@tem.uoc.gr}
%\thanks{thanksMK}

%\author{Eleni Tzanaki}
%\address{Department of Applied Mathematics, University of Crete,
%  GR-714 09 Heraklion, Greece}
%\curraddr{lalaET}
%\email{etzanaki@tem.uoc.gr}
%\thanks{thanksET}

\phantomsection
\addcontentsline{toc}{section}{Title}

\maketitle

\phantomsection
\addcontentsline{toc}{section}{Abstract}

\begin{abstract}
  Consider a set of $r$ convex $d$-polytopes $P_1,P_2,\ldots,P_r$, where
$d\ge{}3$ and $r\ge{}2$, and let $n_i$ be the number of vertices of
$P_i$, $1\le{}i\le{}r$.
It has been shown by Fukuda and Weibel \cite{fw-fmacp-07} that the
number of $k$-faces of the Minkowski sum, $P_1+P_2+\cdots+P_r$, is
bounded from above by $\Phi_{k+r}(n_1,n_2,\ldots,n_r)$, where
$\Phi_{\ell}(n_1,n_2,\ldots,n_r)=
\sum_{\substack{1\le{}s_i\le{}n_i\\s_1+\ldots+s_r=\ell}}
\prod_{i=1}^r\binom{n_i}{s_i}$, $\ell\ge{}r$.
Fukuda and Weibel \cite{fw-fmacp-07}
have also shown that the upper bound mentioned above is tight for
$d\ge{}4$, $2\le{}r\le{}\lfloor\frac{d}{2}\rfloor$, and for all
$0\le{}k\le{}\lfloor\frac{d}{2}\rfloor-r$.

In this paper we construct a set of $r$ neighborly
$d$-polytopes $P_1,P_2,\ldots,P_r$, where $d\ge{}3$ and
$2\le{}r\le{}d-1$, for which the upper bound of Fukuda and Weibel is
attained for all $0\le{}k\le{}\lfloor\frac{d+r-1}{2}\rfloor-r$. A
direct consequence of our result is a tight asymptotic bound on the
complexity of the Minkowski sum $P_1+P_2+\cdots+P_r$, for any fixed
dimension $d$ and any $2\le{}r\le{}d-1$, when the number of vertices
of the polytopes is (asymptotically) the same.

Our approach is based on what is known as the Cayley trick for
Minkowski sums: the Minkowski sum, $P_1+P_2+\ldots+P_r$, is the
intersection of the Cayley polytope $\mathcal{P}$, in
$\mathbb{R}^{r-1}\times\mathbb{R}^{d}$, of the $d$-polytopes
$P_1,P_2,\ldots,P_r$, with an appropriately defined $d$-flat
$\overline{W}$ of $\mathbb{R}^{r-1}\times\mathbb{R}^{d}$. 
To prove our bounds, we construct the $d$-polytopes
$P_1,P_2,\ldots,P_r$, where $d\ge{}3$ and $2\le{}r\le{}d-1$, in such a
way so that the number of $(k-1)$-faces of $\mathcal{P}$, that intersect
the $d$-flat $\overline{W}$, is equal to
$\Phi_{k}(n_1,n_2,\ldots,n_r)$, for all
$r\le{}k\le\lfloor\frac{d+r-1}{2}\rfloor$. The tightness of our bounds
then follows from the Cayley trick: the $(k+r-1)$-faces of the intersection
of $\mathcal{P}$ with $\overline{W}$
are in one-to-one correspondence with the $k$-faces of
$P_1+P_2+\cdots+P_r$, which implies that
$f_k(P_1+P_2+\cdots+P_r)=\Phi_{k+r}(n_1,n_2,\ldots,n_r)$, for all
$0\le{}k\le{}\lfloor\frac{d+r-1}{2}\rfloor-r$.

  \bigskip\noindent
  \textit{Key\;words:}\/
  high-dimensional geometry\sep discrete geometry\sep
  combinatorial geometry\sep Cayley trick\sep lower bounds\sep
  Minkowski sum\sep convex polytopes
  \medskip\par\noindent
  \textit{2010 MSC:}\/ 52B05, 52B11, 52C45, 68U05
\end{abstract}

\clearpage

\newcommand{\techlemma}[1]{
\begin{lemma}{#1}
  Let $\kappa_1,\ldots,\kappa_n$ be $n\ge{}2$ integers such that
  $\kappa_i\ge{}2$, $1\le{}i\le{}n$, and let $K=\sum_{i=1}^{n}\kappa_i$. Let
  $x_{i,j}$ be real numbers such that
  $0<x_{i,1}<x_{i,2}<\ldots<x_{i,\kappa_i}$, $1\le{}i\le{}n$. Let
  $\beta_i$, $1\le{}i\le{}n$ be non-negative integers such that
  $\beta_1>\beta_2>\ldots>\beta_n\ge{}0$. Finally, let $\tau$ be a positive real
  parameter, and define $\mddet$ to be the determinant:
  \begin{center}
    \includegraphics[width=\textwidth]{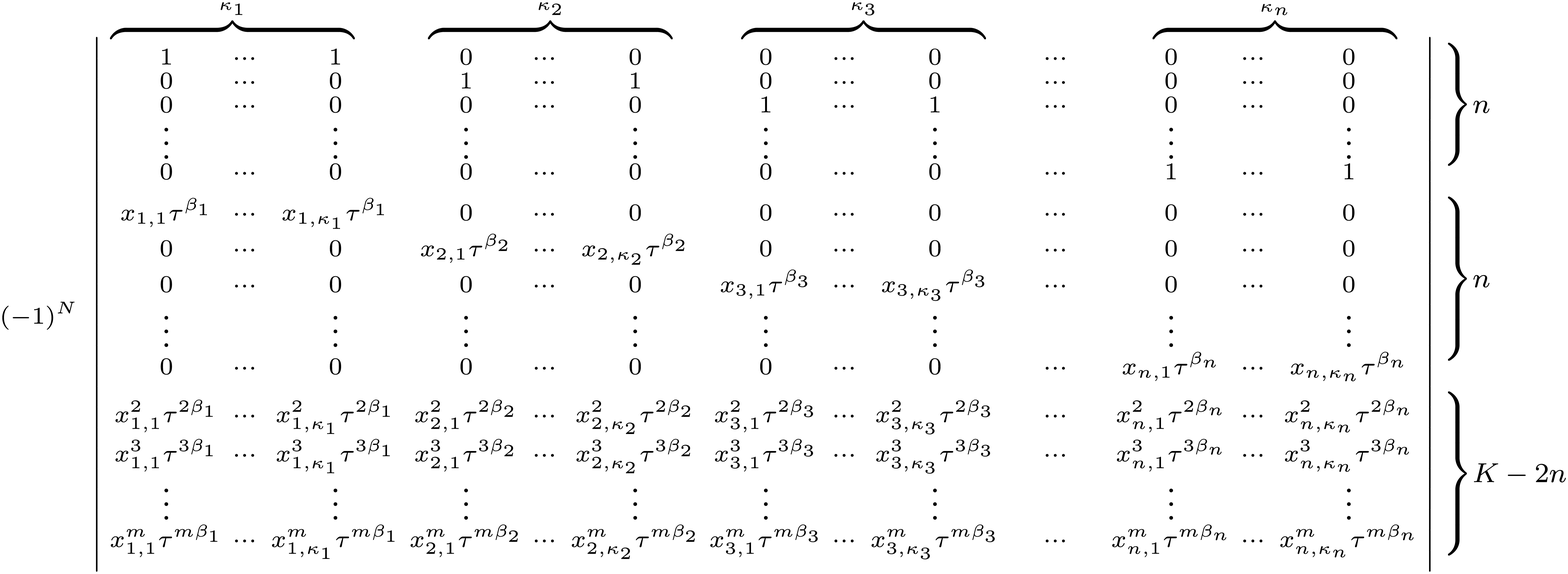}
  \end{center}
\hide{
  \begin{equation*}
  %\mddet=
    \begin{smallmatrix}
      \\[20pt]
      (-1)^{N}
    \end{smallmatrix}\!\!\!
    \begin{tabular}{cc}
      $\begin{smallmatrix}
        \overbrace{\makebox[25mm]{}}^{\kappa_1}\csep\csep\,\,
        &\overbrace{\makebox[25mm]{}}^{\kappa_2}\csep\csep\,
        &\overbrace{\makebox[25mm]{}}^{\kappa_3}\csep\csep\,
        &\csep\csep\csep\,
        &\overbrace{\makebox[25mm]{}}^{\kappa_n}\,\,\,\\[2pt]
      \end{smallmatrix}$&\\
      $\raisebox{75pt}{\ldelim|{12}{5pt}}\begin{smallmatrix}
        1&\cdots&1&0&\cdots&0&0&\cdots&0&\quad\cdots\quad&0&\cdots&0\\[1pt]
        0&\cdots&0&1&\cdots&1&0&\cdots&0&\quad\cdots\quad&0&\cdots&0\\[1pt]
        0&\cdots&0&0&\cdots&0&1&\cdots&1&\quad\cdots\quad&0&\cdots&0\\[-3pt]
        \vdots&&\vdots&\vdots&&\vdots&\vdots&&\vdots&&\vdots&&\vdots\\
        0&\cdots&0&0&\cdots&0&0&\cdots&0&\quad\cdots\quad&1&\cdots&1\\[4pt]
        x_{1,1}\tau^{\beta_1}&\cdots&x_{1,\kappa_1}\tau^{\beta_1}&0&\cdots&0&0&\cdots&0&\quad\cdots\quad&0&\cdots&0\\
        0&\cdots&0&x_{2,1}\tau^{\beta_2}&\cdots&x_{2,\kappa_2}\tau^{\beta_2}&0&\cdots&0&\quad\cdots\quad&0&\cdots&0\\
        0&\cdots&0&0&\cdots&0&x_{3,1}\tau^{\beta_3}&\cdots&x_{3,\kappa_3}\tau^{\beta_3}&\quad\cdots\quad&0&\cdots&0\\[-3pt]
        \vdots&&\vdots&\vdots&&\vdots&\vdots&&\vdots&&\vdots&&\vdots\\
        0&\cdots&0&0&\cdots&0&0&\cdots&0&\quad\cdots\quad&x_{n,1}\tau^{\beta_n}&\cdots&x_{n,\kappa_n}\tau^{\beta_n}\\[4pt]
        x_{1,1}^2\tau^{2\beta_1}&\cdots&x_{1,\kappa_1}^2\tau^{2\beta_1}&x_{2,1}^2\tau^{2\beta_2}&\cdots&x_{2,\kappa_2}^2\tau^{2\beta_2}
        &x_{3,1}^2\tau^{2\beta_3}&\cdots&x_{3,\kappa_3}^2\tau^{2\beta_3}&\quad\cdots\quad&x_{n,1}^2\tau^{2\beta_n}&\cdots&x_{n,\kappa_n}^2\tau^{2\beta_n}\\
        x_{1,1}^3\tau^{3\beta_1}&\cdots&x_{1,\kappa_1}^3\tau^{3\beta_1}&x_{2,1}^3\tau^{3\beta_2}&\cdots&x_{2,\kappa_2}^3\tau^{3\beta_2}
        &x_{3,1}^3\tau^{3\beta_3}&\cdots&x_{3,\kappa_3}^3\tau^{3\beta_3}&\quad\cdots\quad&x_{n,1}^3\tau^{3\beta_n}&\cdots&x_{n,\kappa_n}^3\tau^{3\beta_n}\\[-3pt]
        \vdots&&\vdots&\vdots&&\vdots&\vdots&&\vdots&&\vdots&&\vdots\\
        x_{1,1}^m\tau^{m\beta_1}&\cdots&x_{1,\kappa_1}^m\tau^{m\beta_1}&x_{2,1}^m\tau^{m\beta_2}&\cdots&x_{2,\kappa_2}^m\tau^{m\beta_2}
        &x_{3,1}^m\tau^{m\beta_3}&\cdots&x_{3,\kappa_3}^m\tau^{m\beta_3}&\quad\cdots\quad&x_{n,1}^m\tau^{m\beta_n}&\cdots&x_{n,\kappa_n}^m\tau^{m\beta_n}\\
      \end{smallmatrix}\raisebox{75pt}{\rdelim|{12}{3pt}}$&%
      \hspace*{-4mm}$\begin{smallmatrix}\\[-40pt]
        \rdelim\}{3}{40pt}[\footnotesize$n$]\\[3pt] \\ \\ \\ \\ \\ \\
        \rdelim\}{4}{40pt}[\footnotesize$n$]\\[3pt] \\ \\ \\ \\ \\ \\ \\ \\
        \rdelim\}{4}{40pt}[\footnotesize$K-2n$]
      \end{smallmatrix}$\\
    \end{tabular}
  \end{equation*}
}
  where $N=\frac{n(n-1)}{2}$ and $m=K-2n+1$. 
  Then, there exists some $\tau_0>0$, such that for all
  $\tau\in(0,\tau_0)$,the determinant $\mddet$ is strictly positive.
\end{lemma}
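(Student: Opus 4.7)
Write $\mddet = (-1)^N \det M$ where $M$ denotes the displayed $K\times K$ matrix without the leading sign. The plan is to identify the leading-order term of $\det M$ as a polynomial in $\tau$ as $\tau\to 0^+$ and show that it has sign $(-1)^N$, so that $\mddet$ itself is positive for all sufficiently small $\tau>0$. Setting $y_{i,j}:=x_{i,j}\tau^{\beta_i}$ and expanding $\det M$ via the Leibniz formula, the block-support structure of $M$ forces every nonzero term to come from a permutation which, for each group $i$, sends one column of group $i$ to the first-block row $i$ (contributing $1$), one to the second-block row $n+i$ (contributing $y_{i,b_i}$), and the remaining $\kappa_i-2$ columns to third-block rows whose power indices form a set $S_i\subseteq\{2,\ldots,m\}$ of size $\kappa_i-2$, with $S_1,\ldots,S_n$ partitioning $\{2,\ldots,m\}$. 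The $\tau$-exponent of such a term is $\sum_{i=1}^n\beta_i\bigl(1+\sum S_i\bigr)$.

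The first step is to apply the rearrangement inequality: since $\beta_1>\beta_2>\cdots>\beta_n\ge 0$, this exponent is minimized uniquely by the greedy partition $\{S_i^{\min}\}$ that assigns the smallest $\kappa_i-2$ indices of $\{2,\ldots,m\}$ to the group with the largest $\beta$, giving consecutive intervals $S_1^{\min}=\{2,\ldots,\kappa_1-1\}$, $S_2^{\min}=\{\kappa_1,\ldots,\kappa_1+\kappa_2-3\}$, and so on. Let $E$ denote this minimum $\tau$-exponent.

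Next, I would collect all terms whose $\tau$-exponent equals $E$ and show that their sum factors as a product of positive alternant determinants. Reordering the rows of $M$ so that, for each $i$ in turn, the rows belonging to group $i$ (first-block row $i$, second-block row $n+i$, and the third-block rows indexed by $S_i^{\min}$ in ascending order of power) become consecutive, the dominant contribution is block-diagonal, the $i$-th block being the $\kappa_i\times\kappa_i$ alternant $V_i(x)=\bigl(x_{i,j}^{\,p}\bigr)_{p\in\{0,1\}\cup S_i^{\min},\,j}$. By the standard alternant identity, $\det V_i(x)=\prod_{j<j'}(x_{i,j'}-x_{i,j})\cdot s_{\lambda^{(i)}}(x_{i,1},\ldots,x_{i,\kappa_i})$ for an appropriate partition $\lambda^{(i)}$; both factors are strictly positive under the hypotheses ($0<x_{i,1}<\cdots<x_{i,\kappa_i}$ for the Vandermonde part, and monomial positivity of Schur polynomials on the positive orthant for the Schur factor). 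Thus $\prod_i\det V_i(x)>0$.

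The main obstacle, and the step I expect to demand the most care, is checking that the row-reordering permutation $\sigma_0$ has $\operatorname{sign}(\sigma_0)=(-1)^N$, which is exactly what is needed to cancel the $(-1)^N$ prefactor in $\mddet$. I would carry out a direct inversion count in $\sigma_0$: pairs (first-block row $i$, second-block row $n+j$) with $i>j$ produce exactly $\binom{n}{2}=N$ inversions; interactions between third-block rows and each of the other two blocks produce inversion sets in bijection (both counting, with multiplicity $\kappa_{i'}-2$, the pairs of group indices with $i>i'$), so their combined contribution is even; and pairs of rows within a single block contribute no inversions. Putting these pieces together, the leading term of $\mddet=(-1)^N\det M$ as $\tau\to 0^+$ equals $\bigl(\prod_{i=1}^n\det V_i(x)\bigr)\tau^E>0$, yielding the desired $\tau_0>0$ with $\mddet>0$ on $(0,\tau_0)$.
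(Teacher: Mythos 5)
Your proof is correct and follows essentially the same strategy as the paper's: expand the determinant (you via Leibniz, the paper via Laplace along column blocks, which amounts to the same thing), identify the unique minimal $\tau$-exponent term by a greedy/rearrangement argument, recognize its coefficient as a product of generalized Vandermonde (alternant/Schur) determinants that are strictly positive for $0<x_{i,1}<\cdots<x_{i,\kappa_i}$, and verify that the accumulated sign cancels the $(-1)^N$ prefactor. The only real presentational difference is the sign computation — your inversion count on the row-reordering permutation $\sigma_0$ is a somewhat cleaner route than the paper's direct evaluation of the Laplace sign sum $\sigma(\boldsymbol{\rho},\mathbf{c})$, but they establish the same parity fact.
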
}

\newcommand{\Hexample}{%
  For example, for $d=8$, $r=3$, $k=4$, $|U_1|=|U_3|=1$, $|U_2|=2$,
  and $\nu_i=3-i$, $i=1,2,3$, $H_U(\mb{x})$ is the $11\times{}11$
  determinant:
  \begin{equation*}
    -\left|\begin{array}{@{\,}c@{\,\,\,}c@{\,\,\,}c@{\,\,\,}c@{\,\,\,}c@{\,\,\,}c@{\,\,\,}c@{\,\,\,}c@{\,\,\,}c@{\,\,\,}c@{\,\,\,}c@{\,}}%{@{\,}ccccccccccc@{\,}}
        1&1&1&1&1&1&1&1&1&1&1\\[3pt]
        x_1&0&0&1&1&1&1&0&0&0&0\\[3pt]
        x_2&0&0&0&0&0&0&1&1&1&1\\[3pt]
        x_3&\alpha_{1,j_{1,1}}\tau^2&\alpha_{1,j_{1,1}}^\epsilon\tau^2
        &0&0&0&0&0&0&0&0\\[3pt]
        x_4&0&0&\alpha_{2,j_{2,1}}\tau&\alpha_{2,j_{2,1}}^\epsilon\tau
        &\alpha_{2,j_{2,2}}\tau&\alpha_{2,j_{2,2}}^\epsilon\tau&0&0&0&0\\[3pt]
        x_5&0&0&0&0&0&0&\alpha_{3,j_{3,1}}&\alpha_{3,j_{3,1}}^\epsilon&M&2M\\[3pt]
        x_6&\alpha_{1,j_{1,1}}^2\tau^4&(\alpha_{1,j_{1,1}}^\epsilon)^2\tau^4
        &\alpha_{2,j_{2,1}}^2\tau^2&(\alpha_{2,j_{2,1}}^\epsilon)^2\tau^2
        &\alpha_{2,j_{2,2}}^2\tau^2&(\alpha_{2,j_{2,2}}^\epsilon)^2\tau^2
        &\alpha_{3,j_{3,1}}^2&(\alpha_{3,j_{3,1}}^\epsilon)^2&M^2&(2M)^2\\[3pt]
        x_7&\alpha_{1,j_{1,1}}^3\tau^6&(\alpha_{1,j_{1,1}}^\epsilon)^3\tau^6
        &\alpha_{2,j_{2,1}}^3\tau^3&(\alpha_{2,j_{2,1}}^\epsilon)^3\tau^3
        &\alpha_{2,j_{2,2}}^3\tau^3&(\alpha_{2,j_{2,2}}^\epsilon)^3\tau^3
        &\alpha_{3,j_{3,1}}^3&(\alpha_{3,j_{3,1}}^\epsilon)^2&M^3&(2M)^3\\[3pt]
        x_8&\alpha_{1,j_{1,1}}^4\tau^8&(\alpha_{1,j_{1,1}}^\epsilon)^4\tau^8
        &\alpha_{2,j_{2,1}}^4\tau^4&(\alpha_{2,j_{2,1}}^\epsilon)^4\tau^4
        &\alpha_{2,j_{2,2}}^4\tau^4&(\alpha_{2,j_{2,2}}^\epsilon)^4\tau^4
        &\alpha_{3,j_{3,1}}^4&(\alpha_{3,j_{3,1}}^\epsilon)^2&M^4&(2M)^4\\[3pt]
        x_9&\alpha_{1,j_{1,1}}^5\tau^{10}&(\alpha_{1,j_{1,1}}^\epsilon)^5\tau^{10}
        &\alpha_{2,j_{2,1}}^5\tau^5&(\alpha_{2,j_{2,1}}^\epsilon)^5\tau^5
        &\alpha_{2,j_{2,2}}^5\tau^5&(\alpha_{2,j_{2,2}}^\epsilon)^5\tau^5
        &\alpha_{3,j_{3,1}}^5&(\alpha_{3,j_{3,1}}^\epsilon)^2&M^5&(2M)^5\\[3pt]
        x_{10}&\alpha_{1,j_{1,1}}^6\tau^{12}&(\alpha_{1,j_{1,1}}^\epsilon)^6\tau^{12}
        &\alpha_{2,j_{2,1}}^6\tau^6&(\alpha_{2,j_{2,1}}^\epsilon)^6\tau^6
        &\alpha_{2,j_{2,2}}^6\tau^6&(\alpha_{2,j_{2,2}}^\epsilon)^6\tau^6
        &\alpha_{3,j_{3,1}}^6&(\alpha_{3,j_{3,1}}^\epsilon)^2&M^6&(2M)^6
      \end{array}\right|.
  \end{equation*}
}

%%%%%%%%%%%%%%%%%%%%%%%%%%%%%%%%%%%%%%%%%%%%%%%%%%%%%%%%%%%%%%%%%%%%%%%%%
%%%%%%%%%%%%%%%%%%%%%%%%%%%%%%%%%%%%%%%%%%%%%%%%%%%%%%%%%%%%%%%%%%%%%%%%%

\newcommand{\Fexample}{%
  For example, for $d=8$, $r=3$, $k=4$, $|V_1|=|V_3|=1$, $|V_2|=2$,
  and $\nu_i=3-i$, $i=1,2,3$, $F_V(\mb{x};\zeta)$ is the
  $11\times{}11$ determinant:
  \begin{equation*}
    -\left|\begin{array}{@{\,}c@{\,\,\,}c@{\,\,\,}c@{\,\,\,}c@{\,\,\,}c@{\,\,\,}c@{\,\,\,}c@{\,\,\,}c@{\,\,\,}c@{\,\,\,}c@{\,\,\,}c@{\,}}
        1&1&1&1&1&1&1&1&1&1&1\\[3pt]
        x_1&0&0&1&1&1&1&0&0&0&0\\[3pt]
        x_2&0&0&0&0&0&0&1&1&1&1\\[3pt]
        x_3&\alpha_{1,j_{1,1}}\tau^2&\alpha_{1,j_{1,1}}^\epsilon\tau^2
        &\zeta\alpha_{2,j_{2,1}}^7\tau^7&\zeta(\alpha_{2,j_{2,1}}^\epsilon)^7\tau^7
        &\zeta\alpha_{2,j_{2,2}}^7\tau^7&\zeta(\alpha_{2,j_{2,2}}^\epsilon)^7\tau^7
        &\zeta\alpha_{3,j_{3,1}}^7&\zeta(\alpha_{3,j_{3,1}}^\epsilon)^7
        &\zeta{}M^7&\zeta(2M)^7\\[3pt]
        x_4&\zeta\alpha_{1,j_{1,1}}^7\tau^{14}&\zeta(\alpha_{1,j_{1,1}}^\epsilon)^7\tau^{14}
        &\alpha_{2,j_{2,1}}\tau&\alpha_{2,j_{2,1}}^\epsilon\tau
        &\alpha_{2,j_{2,2}}\tau&\alpha_{2,j_{2,2}}^\epsilon\tau
        &\zeta\alpha_{3,j_{3,1}}^8&\zeta(\alpha_{3,j_{3,1}}^\epsilon)^8
        &\zeta{}M^8&\zeta(2M)^8\\[3pt]
        x_5&\zeta\alpha_{1,j_{1,1}}^8\tau^{16}&\zeta(\alpha_{1,j_{1,1}}^\epsilon)^8\tau^{16}
        &\zeta\alpha_{2,j_{2,1}}^8\tau^8&\zeta(\alpha_{2,j_{2,1}}^\epsilon)^8\tau^8
        &\zeta\alpha_{2,j_{2,2}}^8\tau^8&\zeta(\alpha_{2,j_{2,2}}^\epsilon)^8\tau^8
        &\alpha_{3,j_{3,1}}&\alpha_{3,j_{3,1}}^\epsilon&M&2M\\[3pt]
        x_6&\alpha_{1,j_{1,1}}^2\tau^4&(\alpha_{1,j_{1,1}}^\epsilon)^2\tau^4
        &\alpha_{2,j_{2,1}}^2\tau^2&(\alpha_{2,j_{2,1}}^\epsilon)^2\tau^2
        &\alpha_{2,j_{2,2}}^2\tau^2&(\alpha_{2,j_{2,2}}^\epsilon)^2\tau^2
        &\alpha_{3,j_{3,1}}^2&(\alpha_{3,j_{3,1}}^\epsilon)^2&M^2&(2M)^2\\[3pt]
        x_7&\alpha_{1,j_{1,1}}^3\tau^6&(\alpha_{1,j_{1,1}}^\epsilon)^3\tau^6
        &\alpha_{2,j_{2,1}}^3\tau^3&(\alpha_{2,j_{2,1}}^\epsilon)^3\tau^3
        &\alpha_{2,j_{2,2}}^3\tau^3&(\alpha_{2,j_{2,2}}^\epsilon)^3\tau^3
        &\alpha_{3,j_{3,1}}^3&(\alpha_{3,j_{3,1}}^\epsilon)^3&M^3&(2M)^3\\[3pt]
        x_8&\alpha_{1,j_{1,1}}^4\tau^8&(\alpha_{1,j_{1,1}}^\epsilon)^4\tau^8
        &\alpha_{2,j_{2,1}}^4\tau^4&(\alpha_{2,j_{2,1}}^\epsilon)^4\tau^4
        &\alpha_{2,j_{2,2}}^4\tau^4&(\alpha_{2,j_{2,2}}^\epsilon)^4\tau^4
        &\alpha_{3,j_{3,1}}^4&(\alpha_{3,j_{3,1}}^\epsilon)^4&M^4&(2M)^4\\[3pt]
        x_9&\alpha_{1,j_{1,1}}^5\tau^{10}&(\alpha_{1,j_{1,1}}^\epsilon)^5\tau^{10}
        &\alpha_{2,j_{2,1}}^5\tau^5&(\alpha_{2,j_{2,1}}^\epsilon)^5\tau^5
        &\alpha_{2,j_{2,2}}^5\tau^5&(\alpha_{2,j_{2,2}}^\epsilon)^5\tau^5
        &\alpha_{3,j_{3,1}}^5&(\alpha_{3,j_{3,1}}^\epsilon)^5&M^5&(2M)^5\\[3pt]
        x_{10}&\alpha_{1,j_{1,1}}^6\tau^{12}&(\alpha_{1,j_{1,1}}^\epsilon)^6\tau^{12}
        &\alpha_{2,j_{2,1}}^6\tau^6&(\alpha_{2,j_{2,1}}^\epsilon)^6\tau^6
        &\alpha_{2,j_{2,2}}^6\tau^6&(\alpha_{2,j_{2,2}}^\epsilon)^6\tau^6
        &\alpha_{3,j_{3,1}}^6&(\alpha_{3,j_{3,1}}^\epsilon)^6&M^6&(2M)^6
      \end{array}\right|.
  \end{equation*}
}

\section{Introduction}
\label{sec:intro}

Given two sets $A$ and $B$ in $\reals^{d}$, their Minkowski sum,
$A\MSsymbol{}B$, is defined as the set
$\{a+b\mid{}a\in{}A,b\in{}B\}$. Minkowski sums are fundamental
structures in both Mathematics and Computer Science. They appear in a
wide variety of sub-disciplines, including Combinatorial Geometry,
Computational Geometry, Computer Algebra, Computer-Aided Design and
Robotics, just to name a few. In recent years, they have found
applications in areas such as Game Theory and Computational
Biology. It is beyond the scope of this paper to discuss the different
uses and applications of Minkowski sums; the interested reader may
refer to \cite{w-mspcc-07} and \cite{f-mscaa-08},
and the references therein.

The focus of this work is on Minkowski sums of polytopes, and, in
particular, convex polytopes. Tight, or almost tight, asymptotic
bounds on the worst-case complexity of the Minkowski sum of two,
possibly non-convex, polytopes may be found, \eg in
\cite{bkos-cgaa-00}, \cite{s-amp-04}, \cite{fhw-emcms-09},
and \cite{kt-chsch-11b}.
In this paper, we are interested in \emph{exact} 
bounds on the complexity of the Minkowski sum of two or more
polytopes\footnote{In the rest of the paper all polytopes are
  considered to be convex.}. Our aim is to answer a natural and
fundamental question: given $r$ $d$-polytopes,
what is the maximum number of $k$-faces of their Minkowski sum?

For $r\ge{}2$ polygons (2-polytopes) $\range{P}$, it is
known that the number of vertices (or edges) of $\MS$ is equal, in
the worst-case, to $\sum_{i=1}^{r}n_i$, where $n_i$ is the number of
vertices (or edges) of $P_i$ (see, \eg \cite{bkos-cgaa-00},
\cite{w-mspcc-07}).
For higher-dimensional polytopes, the first answer to this question
was given by Gritzmann and Sturmfels \cite{gs-mapcc-93}: given $r$
polytopes $\range{P}$ in $\reals^d$, with a total of $n$ non-parallel edges,
the number of $l$-faces of $\MS$ is bounded from above by
$2\binom{n}{l}\sum_{j=1}^{d-1-l}\binom{n-l-1}{j}$. This bound is
attained if the polytopes $P_i$ are \emph{zonotopes}, whose generating
edges are in general position.
Fukuda and Weibel \cite{fw-fmacp-07} have shown, what they call the
\emph{trivial upper bound}: given $r$ $d$-polytopes
$\range{P}$, where $d\ge{}3$ and $r\ge{}2$, we have
\begin{equation}\label{equ:trivial-ub}
  f_k(\MS)\le{}\ub{k+r},
\end{equation}
where $n_i$ is the number of vertices of $P_i$, $1\le{}i\le{}r$, and
\begin{equation}\label{equ:phi-def}
  \ub{\ell}=\sum_{\substack{1\le{}s_i\le{}n_i\\s_1+\ldots+s_r=\ell}}
  \prod_{i=1}^r\binom{n_i}{s_i},\qquad \ell\ge{}r,\qquad s_i\in\naturals.
\end{equation}
%and the $s_i$'s are integral.
In the same paper, Fukuda and Weibel have shown that the trivial upper
bound is tight for: (i) $d\ge{}4$, $2\le{}r\le\lexp{d}$ and for all
$0\le{}k\le\lexp{d}-r$, and (ii) for the number of vertices,
$f_0(\MS)$, of $\MS$, when $d\ge{}3$ and $2\le{}r\le{}d-1$.
For $r\ge{}d$, Sanyal \cite{s-tovnm-09} has shown that the trivial
bound for $f_0(\MS)$ cannot be attained, since in this case:
\begin{equation*}
  f_0(\MS)\le\left(1-\frac{1}{(d+1)^d}\right)\prod_{i=1}^{r}n_{i}
  <\prod_{i=1}^{r}n_{i}.
\end{equation*}
Tight bounds for $f_0(\MS)$, where $r\ge{}d$, have very recently be
shown by Weibel \cite{w-mfmsl-11}, namely:
\begin{equation*}
  f_0(\MS)\le\alpha+\sum_{j=1}^{d-1}(-1)^{d-1-j}\binom{r-1-j}{d-1-j}
  \sum_{S\in\cC_j^r}\left(\prod_{i\in{}S}f_0(P_i)-\alpha\right),
\end{equation*}
where $\cC_j^r$ is the family of subsets of $\{1,2,\ldots,r\}$ of
cardinality $j$, and $\alpha=2(d-2\lexp{d})$.
%$\alpha=2$ if $d$ is odd, while $\alpha=0$, if $d$ is even.

Tight bounds for \emph{all} face numbers, \ie for all
$0\le{}k\le{}d-1$, are only known for \emph{two} $d$-polytopes, where
$d\ge{}3$.
Fukuda and Weibel \cite{fw-fmacp-07} have shown that, given two
3-polytopes $P_1$ and $P_2$ in $\reals^3$, the number of
$k$-faces of $P_1\MSsymbol{}P_2$, $0\le{}k\le{}2$, is bounded from
above as follows:
\begin{equation*}
  \begin{aligned}
    f_0(P_1\MSsymbol{}P_2)&\le{}n_1 n_2,\\
    f_1(P_1\MSsymbol{}P_2)&\le{}2n_1 n_2+n_1+n_2-8,\\
    f_2(P_1\MSsymbol{}P_2)&\le{}n_1 n_2+n_1+n_2-6,
  \end{aligned}
\end{equation*}
where $n_i$ is the number of vertices of $P_i$, $i=1,2$. These bounds
are tight.
Weibel \cite{w-mspcc-07} has derived analogous tight expressions in
terms of the number of facets $m_i$ of $P_i$, $i=1,2$:
\begin{equation}\label{equ:ub3}
  \begin{aligned}
    f_0(P_1\MSsymbol{}P_2)&\le{}4m_1 m_2-8m_1-8m_2+16,\\
    f_1(P_1\MSsymbol{}P_2)&\le{}8m_1 m_2-17m_1-17m_2+40,\\
    f_2(P_1\MSsymbol{}P_2)&\le{}4m_1 m_2-9m_1-9m_2+26.
  \end{aligned}
\end{equation}
Weibel's expression for $f_2(P_1\MSsymbol{}P_2)$
(cf. \eqref{equ:ub3}) has been generalized to the number of
facets of the Minkowski sum of any number of 3-polytopes
by Fogel, Halperin and Weibel \cite{fhw-emcms-09}; they have shown
that, for $r\ge{}2$, the following tight bound holds:
\begin{equation*}
  f_2(\MS)\le\sum_{1\le{}i<j\le{}r}(2m_i-5)(2m_j-5)
  +\sum_{i=1}^r{}m_i+\binom{r}{2},
\end{equation*}
where $m_i=f_2(P_i)$, $1\le{}i\le{}r$. %, and this bound is tight.
Regarding $d$-polytopes, where $d\ge{}4$, Karavelas and Tzanaki
\cite{kt-mnfms-11b,kt-mnfms-12}, have shown that, for 
$1\le{}k\le{}d$:
\begin{equation}\label{equ:2p-any-d}
  f_{k-1}(P_1\MSsymbol{}P_2)\le{}f_k(C_{d+1}(n_1+n_2))
  -\sum_{i=0}^{\lexp{d+1}}\binom{d+1-i}{k+1-i}
  \left(\binom{n_1-d-2+i}{i}+\binom{n_2-d-2+i}{i}\right),
\end{equation}
where $n_i=f_0(P_i)$, $i=1,2$, and $C_d(n)$ stands for the cyclic
$d$-polytope with $n$ vertices. The bounds in
\eqref{equ:2p-any-d} have been shown to be tight for any $d\ge{}3$ and
for all $1\le{}k\le{}d$, and, clearly, match the corresponding
bounds for 2- and 3-polytopes (cf. rel. \eqref{equ:ub3}),
as well as the expressions in \eqref{equ:trivial-ub} for $r=2$ and for
all $0\le{}k\le{}\lexp{d+1}-2$. Notice that the tightness of relations
\eqref{equ:2p-any-d}, implies that the trivial upper bounds in
\eqref{equ:trivial-ub} are also tight for $d\ge{}3$, $r=2$ and
$k\le\lexp{d+1}-2$ (as opposed to $d\ge{}4$, $r=2$ and
$0\le{}k\le{}\lexp{d}-2$).

In this paper, we show that the trivial upper bound
\eqref{equ:trivial-ub} is attained for a wider range of $d$, $r$ and
$k$ than those proved by Fukuda and Weibel \cite{fw-fmacp-07}. More
precisely, we prove that for any $d\ge{}3$, $2\le{}r\le{}d-1$ and for
all $0\le{}k\le{}\lexp{d+r-1}-r$, there exist $r$ neighborly
$d$-polytopes $\range{P}$, for which the number of $k$-faces of their
Minkowski sum attains the trivial upper bound. Our approach is based
on what is known as the \emph{Cayley trick} for Minkowski sums. Let
$\VV_i$ be the vertex set of $P_i$, $1\le{}i\le{}r$, and let
$\range{\mb{b}}$ be an affine basis of $\reals^{r-1}$.
The \emph{Cayley embedding} $\cC(\range{\VV})$, in
$\reals^{r-1}\times\reals^d$, of the vertex sets $\range{\VV}$,
is defined as
$\cC(\range{\VV})=\cup_{i=1}^{r}\{(\mb{b}_i,\mbv)\mid{}\mbv\in\VV_i\}$.
The Minkowski sum, $\MS$, can then be viewed as the intersection of
the \emph{Cayley polytope} $\pP=\conv(\cC(\range{\VV}))$ with an
appropriately defined $d$-flat $\Wavg$ of $\reals^{r-1}\times\reals^d$. 
We exploit this observation in two steps. 
We first construct a set of $r$ $(d-r+1)$-polytopes
$\range{Q}$, with $\range{n}$ vertices, respectively, embedded in
appropriate subspaces of $\reals^d$. The polytopes $\range{Q}$ are
constructed in such a way so that the number of $(k-1)$-faces of the set
$\wW_\qQ$ is equal to $\ub{k}$ for all $r\le{}k\le{}\lexp{d+r-1}$,
where $\wW_\qQ$ is the set of faces of the Cayley polytope $\qQ$ that
have non-empty 
intersection with $\Wavg$. We then perturb, via a single perturbation
parameter $\zeta$, the vertices of the $Q_i$'s to get a set of $r$
full-dimensional (\ie $d$-dimensional) neighborly polytopes $\range{P}$;
we next consider the Cayley polytope $\pP$ of the $P_i$'s, and show that is
possible to choose a small positive value for $\zeta$, so that the
number of $(k-1)$-faces of $\wW_\pP$ is equal to $\ub{k}$ for all
$r\le{}k\le{}\lexp{d+r-1}$, where $\wW_\pP$ is the set of faces of
$\pP$ with non-empty intersection with $\Wavg$. Our tight lower bound
then follows from the fact that the $(k-1)$-faces of $\wW_\pP$ are in
one-to-one correspondence with the $(k-r)$-faces of $\MS$, for all
$r\le{}k\le{}d+r-1$.

Beyond extending, with respect to $d$, $r$ and $k$, the range of tightness
of the trivial upper bound in \eqref{equ:trivial-ub}, our lower bound
construction possesses some additional interesting characteristics:
\begin{enumerate}
%\begin{smallenum}
  % 
\item[1.]
  It gives, as a special case, Fukuda and Weibel's
  tight bound on the number of vertices of the Minkowski sum of $r$
  $d$-polytopes for $d\ge{}3$ and $2\le{}r\le{}d-1$
  (cf. \cite[Theorem 2]{fw-fmacp-07}).
\item[2.]
  It constitutes a generalization of
  the lower bound construction used in \cite{kt-mnfms-11b,kt-mnfms-12}
  to prove the tightness of relation \eqref{equ:2p-any-d} for 
  $k=\lexp{d+1}-2$, and any odd $d\ge{}3$.
\item[3.]
  For any fixed dimension $d\ge{}3$ and any $2\le{}r\le{}d-1$,
  it implies a tight bound on the complexity of the Minkowski sum of $r$
  $d$-polytopes, when the polytopes have (asymptotically) the same
  number of vertices. Notice that the complexity of the Minkowski sum of
  $r$ $d$-polytopes is bounded from above by the complexity of their
  Cayley polytope, which is in $O(n^{\lexp{d+r-1}})$ when we
  assume that the polytopes have $\Theta(n)$ vertices.
  On the other hand, if $n_i=\Theta(n)$, $1\le{}i\le{}r$, our
  construction yields $f_k(\MS)=\Theta(n^{k+r})$,
  $0\le{}k\le{}\lexp{d+r-1}-r$, and, in particular,
  $f_{\lexp{d+r-1}-r}(\MS)=\Theta(n^{\lexp{d+r-1}})$.
  % Hence, for any fixed $d\ge{}3$ and any $2\le{}r\le{}d-1$, the
  % worst-case complexity of the Minkowski sum of $r$ $d$-polytopes is in
  % $\Theta(n^{\lexp{d+r-1}})$, if all polytopes have $\Theta(n)$ vertices.
  % 
\item[4.]
  It gives the maximum possible ranges of $d$, $r$ and $k$ for which
  the $k$-faces of the Minkowski sum of $r$ $d$-polytopes is equal to
  $\ub{k+r}$. If, on the contrary, the trivial upper bound was attained
  for some $k>\lexp{d+r-1}-r$, we would have that the complexity of
  the Minkowski sum of $r$ $n$-vertex $d$-polytopes would be in
  $\Omega(n^{\lexp{d+r-1}+1})$. %, if all polytopes had $n$ vertices.
  For $2\le{}r\le{}d-1$, this directly contradicts with the discussion
  in the previous item, while for $r\ge{}d\ge{}3$, it is known 
  that the complexity of the Minkowski sum of $r$ $n$-vertex
  $d$-polytopes is in $O(r^{d-1}n^{d-1})$ (cf. \cite{w-mfmsl-11}).
%\end{smallenum}
\end{enumerate}

Finally, we believe that our result is \emph{optimal} in the sense
that, for any $d\ge{}3$ and any $2\le{}r\le{}d-1$, the Minkowski sum
of the $r$ $d$-polytopes $\range{P}$ that we construct, has
the maximum possible number of $k$-faces for all $0\le{}k\le{}d-1$.
This has been proved to be true for the case of two $d$-polytopes and
for any odd $d\ge{}3$ (cf. \cite{kt-mnfms-11b,kt-mnfms-12}),
while it is straightforward to show that it also holds true for the
case of two $d$-polytopes and any even $d\ge{}4$.

The remaining sections of this paper are as follows.
In Section \ref{sec:prelim} we give some definitions, describe the
Cayley trick, and discuss its consequences that are relevant to our
results.
In Section \ref{sec:lb} we present the construction that establishes
the tightness of the trivial upper bound for $d\ge{}3$,
$2\le{}r\le{}d-1$ and $0\le{}k\le\lexp{d+r-1}-r$. We conclude with
Section \ref{sec:concl}, where we discuss our results and state
directions for future research. 
\section{Preliminaries}
\label{sec:prelim}

A \emph{convex polytope}, or simply \emph{polytope}, $P$ in
$\reals^d$ is the convex hull of a finite set of points $V$ in
$\reals^d$, called the \emph{vertex set} of $P$.
%\hide{A polytope $P$ can equivalently be described as the intersection of
%all the closed halfspaces containing $V$.}
A \emph{face} of $P$ is the intersection of $P$ with a hyperplane
for which the polytope is contained in one of the two closed
halfspaces delimited by the hyperplane.
The dimension of a face of $P$ is the dimension of its affine hull.
A $k$-face of $P$ is a $k$-dimensional face of $P$.
We consider the polytope itself as a trivial $d$-dimensional face; all
the other faces are called \emph{proper} faces. We use the term
\emph{$d$-polytope} to refer to a polytope the trivial face of which
is $d$-dimensional.
For a $d$-polytope $P$, the $0$-faces of $P$ are its
\emph{vertices}, \hide{the $1$-faces of $P$ are its \emph{edges}, the
$(d-2)$-faces of $P$ are called \emph{ridges},} while the
$(d-1)$-faces are called \emph{facets}.
For $0\le{}k{}\le{}d$ we denote by $f_k(P)$ the number of $k$-faces
of $P$.
Note that every $k$-face $F$ of $P$ is also a $k$-polytope whose
faces are all the faces of $P$ contained in $F$.
Finally, a $d$-polytope $P$ is called \emph{$k$-neighborly}, if every
subset of its vertices of size at most $k$ defines a face of
$P$. The maximum possible level of \emph{neighborliness} of a
$d$-polytope is $\lexp{d}$.
A $\lexp{d}$-neighborly $d$-polytope is simply referred to as
\emph{neighborly}.
%%%%%%

\hide{
A \emph{polytopal complex} $\cC$ is a finite collection of polytopes
in $\reals^d$ such that
(i) $\emptyset\in\cC$,
(ii) if $P\in\cC$ then all the faces of $P$ are also in $\cC$ and
(iii) the intersection $P\cap{}Q$ for two polytopes $P$ and $Q$ in
$\cC$ is a face of both $P$ and $Q$. 
The dimension $\dim(\cC)$ of $\cC$ is the largest dimension of a
polytope in $\cC$.
A polytopal complex is called \emph{pure} if all its maximal (with
respect to inclusion) faces have the same dimension. 
In this case the maximal faces are called the \emph{facets} of
$\cC$. We use the term \emph{$d$-complex} to refer to a
polytopal complex whose maximal faces are $d$-dimensional (\ie the
dimension of $\cC$ is $d$).
A polytopal complex is simplicial if all its faces are simplices. 
Finally, a polytopal complex $\cC'$ is called a \emph{subcomplex}
of a polytopal complex $\cC$ if all faces of $\cC'$ are also faces
of $\cC$.

One important class of polytopal complexes arise from polytopes.
More precisely, a $d$-polytope $P$, together with all its
faces and the empty set, form a $d$-complex, denoted
by $\cC(P)$. The only maximal face of $\cC(P)$, which is clearly
the only facet of $\cC(P)$, is the polytope $P$ itself.
Moreover, all proper faces of $P$ form a pure $(d-1)$-complex,
called the \emph{boundary complex} $\cC(\bx{}P)$, or simply
$\bx{}P$ of $P$. The facets of $\bx{}P$ are just the facets of $P$,
and its dimension is, clearly, $\dim(\bx{}P)=\dim(P)-1=d-1$.
}%\hide

\hide{
The $f$-vector $\mb{f}(P)=(f_{-1}(P),f_0(P),\ldots,f_{d-1}(P))$
of a $d$-polytope $P$ (or its boundary complex $\bx{}P$) is
defined as the $(d+1)$-dimensional vector consisting of the number
$f_k(P)$ of $k$-faces of $P$, $-1\le{}k\le{}d-1$, where $f_{-1}(P)=1$
refers to the empty set.
}%\hide

%\input{mn}

%\subsection{The Cayley trick}
\paragraph{The Cayley trick.}
\label{sec:cayleytrick}

Let $\range{P}$ be $r$ $d$-polytopes with vertex sets $\range{\VV}$,
respectively. Let $\range{\mb{b}}$ be an affine basis of
$\reals^{r-1}$, and call $\mu_i:\reals^d\to\reals^{r-1}\times\reals^d$,
the affine inclusion given by $\mu_i(\mb{x})=(\mb{b}_i,\mb{x})$. 
The \emph{Cayley embedding} $\cC(\range{\VV})$ of the point sets
$\range{\VV}$ is defined as $\cC(\range{\VV})=\cup_{i=1}^r\mu_i(\VV_i)$.
The polytope corresponding to the convex hull
$\conv(\cC(\range{\VV}))$ of the Cayley embedding
$\cC(\range{\VV})$ of $\range{\VV}$ is typically
referred to as the \emph{Cayley polytope} of $\range{P}$.

Let $\mblam=(\range{\lambda})$ be a \emph{weight vector},
\ie $\lambda_i>0$, $1\le{}i\le{}r$, and $\sum_{i=1}^r\lambda_i=1$. The
following lemma, known as \emph{the Cayley trick for Minkowski sums},
relates the $\mblam$-weighted Minkowski sum of the polytopes
$\range{P}$ with the Cayley polytope of these polytopes.
%(we have appropriately rephrased the statement of
%the corresponding lemma in \cite{hrs-ctlsb-00} to match with our
%definitions and notation):

\begin{lemma}[{\cite[Lemma 3.2]{hrs-ctlsb-00}}]
  \label{lem:cayley-embedding}
  Let $\range{P}$ be $r$ $d$-polytopes with vertex sets
  $\range{\VV}\subset\reals^d$. Moreover, let 
  $\mblam=(\range{\lambda})$ be a weight vector, and
  $W(\mblam):=\{\lambda_1\mb{b}_1+\cdots+\lambda_r\mb{b}_r\}\times\reals^d\subset\reals^{r-1}\times\reals^d$.
  Then, the $\mblam$-weighted Minkowski sum
  $\WMS\subset\reals^d$ has the
  following representation as a section of the Cayley embedding
  $\cC(\range{\VV})$ in $\reals^{r-1}\times\reals^d$:
  {\allowdisplaybreaks
    \begin{align*}
      \WMS&\cong\cC(\range{\VV})\wedge{}W(\mblam)\\
      :=\Big\{&\conv
      \{(\mb{b}_1,\mbv_1),(\mb{b}_2,\mbv_2),\ldots,(\mb{b}_r,\mbv_r)\}
      \cap{}W(\mblam)\,:\\
      &(\mb{b}_1,\mbv_1),(\mb{b}_2,\mbv_2),\ldots,(\mb{b}_r,\mbv_r)
      \in{}\cC(\range{\VV})\Big\}.
    \end{align*}}%
  Moreover, $F$ is a facet of $\WMS$ if and only if it is of the form
  $F=F'\wedge{}W(\mblam)$ for a facet $F'$ of $\cC(\range{\VV})$
  containing at least one point $(\mb{b}_i,\mbv_i)$ for all $1\le{}i\le{}r$.
\end{lemma}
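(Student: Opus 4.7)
The plan is to decompose the lemma into two pieces: first, an affine identification of the weighted Minkowski sum $\WMS$ with the section $\conv(\cC(\range{\VV}))\cap{}W(\mblam)$ of the Cayley polytope; second, the resulting facet correspondence. Writing $\VV_i=\{\mbv_{i,1},\ldots,\mbv_{i,n_i}\}$, every point $\mb{p}$ of the Cayley polytope has the form $\mb{p}=\sum_{i=1}^r\sum_{j=1}^{n_i}\mu_{i,j}(\mb{b}_i,\mbv_{i,j})$ with $\mu_{i,j}\ge{}0$ and $\sum_{i,j}\mu_{i,j}=1$, and the starting point of the proof is to solve the linear conditions defining the intersection with $W(\mblam)$ in these coordinates.

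For the first piece, because $\range{\mb{b}}$ is an affine basis of $\reals^{r-1}$, the condition that the first coordinate of $\mb{p}$ equal $\sum_i\lambda_i\mb{b}_i$ is equivalent to $\sum_j\mu_{i,j}=\lambda_i$ for each $i$. Setting $\mb{y}_i:=\sum_j(\mu_{i,j}/\lambda_i)\mbv_{i,j}$ then yields $\mb{y}_i\in{}P_i$, and the second coordinate of $\mb{p}$ becomes $\sum_i\lambda_i\mb{y}_i$, an arbitrary point of $\WMS$. Conversely, any $\mb{y}_i\in{}P_i$ expands as a convex combination of vertices of $\VV_i$, producing admissible $\mu_{i,j}$; the second-coordinate projection therefore defines an affine bijection from $\conv(\cC(\range{\VV}))\cap{}W(\mblam)$ onto $\WMS$. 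The wedge formulation follows by restricting to ``mixed'' simplices $\sigma(\range{\mbv})=\conv\{(\mb{b}_1,\mbv_1),\ldots,(\mb{b}_r,\mbv_r)\}$, one vertex per slice: each such $\sigma$ has dimension $r-1$, meets the $d$-flat $W(\mblam)$ transversely in the single point $(\sum_i\lambda_i\mb{b}_i,\sum_i\lambda_i\mbv_i)$, and as $(\range{\mbv})$ ranges over $\VV_1\times\cdots\times\VV_r$ these points exhaust the candidate Minkowski-combinations whose convex hull is $\WMS$.

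For the facet correspondence, let $\pi_1$ denote projection onto the first $\reals^{r-1}$ factor. For any face $F'$ of $\conv(\cC(\range{\VV}))$, $\pi_1(F')=\conv\{\mb{b}_i:i\in{}I(F')\}$, where $I(F')$ indexes the slices contributing vertices to $F'$. Since $\lambda_i>0$ for every $i$ and $\range{\mb{b}}$ are affinely independent, the point $\sum_i\lambda_i\mb{b}_i$ lies in the relative interior of $\conv\{\range{\mb{b}}\}$, and hence in $\pi_1(F')$ if and only if $I(F')=\{1,\ldots,r\}$; thus $F'\cap{}W(\mblam)\ne\emptyset$ forces $F'$ to be mixed. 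When $F'$ is a facet, so $\dim F'=d+r-2$, a transversality/dimension count gives $\dim(F'\cap{}W(\mblam))=(d+r-2)+d-(d+r-1)=d-1$, which under the affine bijection of the first piece becomes a facet of $\WMS$. Conversely, every facet of the section arises as the intersection with $W(\mblam)$ of some facet of the Cayley polytope crossing $W(\mblam)$ in codimension one, which by the criterion above is mixed.

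The main obstacle is securing the transversality on which the facet correspondence rests: that $W(\mblam)$ meets the relative interior of the Cayley polytope, so that its section is genuinely $d$-dimensional and facets of the section lift to facets of the Cayley polytope. This is ensured by the observation, already used above, that $\sum_i\lambda_i\mb{b}_i$ lies in the relative interior of $\conv\{\range{\mb{b}}\}$, whence $W(\mblam)$ passes through the interior of $\conv(\cC(\range{\VV}))$. Once transversality is in hand, the standard correspondence between the facets of a polytope and the facets of its transverse affine sections yields the claim.
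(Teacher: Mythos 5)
The paper does not prove this statement: it is imported verbatim as Lemma~3.2 of Huber, Rambau and Santos~\cite{hrs-ctlsb-00}, so there is no in-text proof to compare against. Judged on its own, your argument contains the right skeleton --- the coordinate computation that converts the slice condition $\sum_j\mu_{i,j}=\lambda_i$ into an affine bijection between $\conv(\cC(\range{\VV}))\cap W(\mblam)$ and $\WMS$ is exactly right, as is the observation that $F'\cap W(\mblam)\ne\emptyset$ forces $F'$ to be \emph{mixed}, i.e.\ to contain a vertex from every slice. Those are the two substantive ideas, and you have them.

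Where the proof actually has a gap is the step ``once transversality is in hand, the standard correspondence between facets of a polytope and facets of its transverse sections yields the claim.'' The justification you give for transversality --- that $W(\mblam)$ meets the interior of the Cayley polytope --- is not sufficient in general. Counterexample: take the square pyramid $Q=\conv\{(\pm1,\pm1,0),(0,0,1)\}$ in $\reals^3$ and $W=\{x_2=0\}$; then $W$ passes through the interior of $Q$, yet the facet $\conv\{(1,1,0),(-1,1,0),(0,0,1)\}$ meets $W$ only in the single point $(0,0,1)$, not in a $(\dim W-1)$-dimensional set. So ``$W$ meets the interior'' alone does not give the naive dimension count for every facet meeting $W$. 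What rescues the Cayley situation is a strictly stronger fact, which \emph{does} follow from the positivity of the $\lambda_i$ that you already invoke but do not put to this use: $W(\mblam)$ meets the \emph{relative interior of every mixed face} $F'$. Indeed, for a mixed $F'$ with vertex sets $V_i'\subset\VV_i$ in slice $i$, the point $\sum_i\sum_{\mbv\in V_i'}(\lambda_i/|V_i'|)(\mb{b}_i,\mbv)$ is a strictly positive convex combination of all vertices of $F'$, hence lies in $\mathrm{relint}(F')$, and its first coordinate is $\sum_i\lambda_i\mb{b}_i$, so it lies in $W(\mblam)$. Combined with the fact that $\pi_1(\mathrm{aff}(F'))=\reals^{r-1}$ whenever $F'$ is mixed (so $\dim(\mathrm{aff}(F')\cap W(\mblam))=\dim F'-(r-1)$), this gives $\dim(F'\cap W(\mblam))=\dim F'-r+1$ for every mixed face, not only generically. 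That is the missing link that makes your dimension count $(d+r-2)-r+1=d-1$ legitimate for every mixed facet, and it also gives the converse direction cleanly: a supporting hyperplane of the section $\conv(\cC)\cap W(\mblam)$ extends to a supporting hyperplane of $\conv(\cC)$, and the corresponding face of $\conv(\cC)$ must be a mixed facet by the dimension formula. In short, the argument is recoverable from observations you already made, but as written it relies on a ``standard correspondence'' that does not hold under the hypothesis you actually verified.
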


As described in \cite[Corollary 3.7]{hrs-ctlsb-00}, the
$\mblam$-weighted Minkowski sums of $r$ polytopes
$\range{P}$ have isomorphic posets of subdivisions for
different values of $\mblam$. This implies that,
%that the
%$\mblam$-weighted Minkowski sums of $\range{P}$, for any
%two different values $\mblam'$ and $\mblam''$ of 
%$\mblam$, are combinatorially equivalent. In particular,
for any weight vector $\mblam$, the $\mblam$-weighted Minkowski
sum is equivalent to the $\lambdaavg$-weighted Minkowski sum, where
$\lambdaavg$ is the averaging weight vector:
$\lambdaavg=(\frac{1}{r},\frac{1}{r},\ldots,\frac{1}{r})$. On
the other hand, the $\lambdaavg$-weighted Minkowski sum
is nothing but a scaled version of the unweighted Minkowski sum
$\MS$, \ie for any weight vector
$\mblam$, the $\mblam$-weighted Minkowski sum is
combinatorially equivalent to the unweighted Minkowski sum. In that
respect, in the rest of the paper we only consider the (unweighted)
Minkowski sum of $\range{P}$, while our results carry over to
$\mblam$-weighted Minkowski sums, for any weight vector
$\mblam$.

Let $\pP$ be the Cayley polytope of $\range{P}$, and call $\wW_\pP$
the set of faces of $\pP$ that have non-empty intersection with the
$d$-flat $\Wavg=W(\lambdaavg)$. A direct consequence of
Lemma \ref{lem:cayley-embedding} is a bijection between the $(k-1)$-faces
of $\wW_\pP$ and the $(k-r)$-faces of $\MS$, for
$r\le{}k\le{}d+r-1$. This further implies that
\begin{equation}\label{equ:fkW}
  f_{k-1}(\wW_\pP)=f_{k-r}(\MS),\qquad r\le{}k\le{}d+r-1.
\end{equation}

\section{Lower bound construction}
\label{sec:lb}

Given a set $\SSS$ and a partition $\ptn[S]$ of $\SSS$ into $r$ subsets
$\range{S}$, we say that $S$ is a {\em\spansub[{\ptn[S]}]} of $\SSS$
if $S\cap{}S_i\ne\emptyset$ for all $1\le{}i\le{}r$.
Assuming that $n_i$ is the cardinality of $S_i$, the number of
\spansubs[{\ptn[S]}] of $\SSS$ of size $k\ge{}r$ is equal to $\ub{k}$
(cf. \eqref{equ:phi-def}).

In what follows we assume that $d\ge{}3$ and $2\le{}r\le{}d-1$.
We denote by $\range[r-1]{\me}$ the standard basis of $\reals^{r-1}$,
while we use $\me_0$ to denote the zero vector in
$\reals^{r-1}$. Notice that the vectors $\me_0,\range[r-1]{\me}$
form an affine basis of $\reals^{r-1}$.
Consider a set of $r$ $d$-polytopes $\range{P}$, where
$2\le{}r\le{}d-1$. Let $\VV_i$ be the vertex set of $P_i$,
$1\le{}i\le{}r$, let $\VV=\cup_{i=1}^{r}\VV_i$, and call $\ptn$
the partition of $\VV$ into its subsets $\range{\VV}$.
Let $\pP$ be the Cayley polytope of $\range{P}$,
where, in order to perform the Cayley embedding, we have chosen
$\me_0,\me_1,\ldots,\me_{r-1}$ as the affine basis of
$\reals^{r-1}$. Let $\Wavg$ denote the $d$-flat
\begin{equation}\label{equ:Wflat-def}
  \Wavg=
  \{\tfrac{1}{r}\me_0+\tfrac{1}{r}\me_1+\cdots\tfrac{1}{r}\me_{r-1}\}
  \times\reals^d
\end{equation}
of $\reals^{r-1}\times\reals^d$. Call $\wW_\pP$ the set of faces of
$\pP$ that have non-empty intersection with $\Wavg$.
As described in the previous section, the intersection of $\pP$ with
$\Wavg$ is combinatorially equivalent to the Minkowski sum
$\MS$, and, in particular, the $k$-faces of $\MS$ are in one-to-one
correspondence with the $(k+r-1)$-faces of $\wW_\pP$,
$0\le{}k\le{}d-1$. In fact, the faces of $\wW_\pP$ are precisely the
faces of $\pP$ whose vertex sets are \spansubs of $\VV$.
In view of relation \eqref{equ:fkW}, maximizing the value of
$f_{k-r}(\MS)$ is equivalent to maximizing the value of
$f_{k-1}(\wW_\pP)$, where $2\le{}r\le{}d-1$ and $r\le{}k\le{}\dup$.
In this section we exploit this observation, so as to construct a set of
$r$ $d$-polytopes $\range{P}$, with $\range{n}$ vertices,
respectively, for which the number of $(k-1)$-faces of
$\wW_\pP$ attains its maximal possible value $\ub{k}$, for all
$r\le{}k\le\lexp{\dup}$.
%Using relation \eqref{equ:fkW}, we then conclude that, for the
%polytopes $\range{P}$ mentioned above, the number of
%$k$-faces of $\MS$ attains its maximum possible
%value $\ub{k+r}$, for all $0\le{}k\le\lexp{\dup}-r$.

Before getting into the technical details, we first outline our
approach. We start by considering the $\pddo$-dimensional moment curve,
which we embed in $r$ distinct subspaces of $\reals^d$. We
consider the $r$ copies of the $\pddo$-dimensional moment curve as
different curves, and we perturb them appropriately, so that they
become $d$-dimensional moment-like curves. The perturbation is
controlled via a non-negative parameter $\zeta$, which will be chosen
appropriately. We then choose points on these $r$ moment-like curves,
all parameterized by a positive parameter $\tau$, which will again be
chosen appropriately.
We call $\range{P}$ the $r$ $d$-polytopes we get by considering the
points on each moment-like curve, $\pP$ the Cayley polytope of
$\range{P}$, and $\wW_\pP$ the set of faces of $\pP$ that have
non-empty intersection with $\Wavg$.
For these polytopes we show that the number of $(k-1)$-faces of
$\wW_\pP$, where $r\le{}k\le{}\lexp{\dup}$, becomes equal to $\ub{k}$
for small enough positive values of $\zeta$ and $\tau$.

At a more technical level, the proof that $f_{k-1}(\wW_\pP)=\ub{k}$, for all
$r\le{}k\le\lexp{\dup}$, is performed in two steps. We first consider
the cyclic $\pddo$-polytopes $\range{Q}$, embedded in appropriate
subspaces of $\reals^d$. The $Q_i$'s are the \emph{unperturbed},
with respect to $\zeta$, versions of the $d$-polytopes $\range{P}$
(\ie the polytope $Q_i$ is the polytope we get from $P_i$,
when we set $\zeta$ equal to zero). We consider the Cayley
polytope $\qQ$ of $\range{Q}$, seen as polytopes in
$\reals^d$, and we focus on the set $\wW_\qQ$ of faces of $\qQ$, that
are the faces of $\qQ$ intersected by $\Wavg$. Noticing that the
polytopes $\range{Q}$ are parameterized by the parameter $\tau$,
we show that there exists a sufficiently small positive value for
$\tau$, for which the number of $(k-1)$-faces of $\wW_\qQ$ is equal to
$\ub{k}$. Having chosen the appropriate value for $\tau$,
which we denote by $\tau^\star$, we then consider the
polytopes $\range{P}$ (with $\tau$ set to $\tau^\star$), and show that
for sufficiently small $\zeta$, $f_{k-1}(\wW_{\pP})$ is equal to
$\ub{k}$.

%%%%%%%%%%%%%%%%%%%%%%%%%%%%%%%%%%%%%%%%%%%%%%%%%%%%%%%%%%%%%%%%%%%%%%%%
% technical lemma -- START
We start off with a technical lemma and sketch its proof. The detailed
proof may be found in Section \ref{app:signdet} of the Appendix.

\techlemma{\label{lem:sign-generic-det}}
\savecounter{\thetheorem}
% technical lemma -- END
%%%%%%%%%%%%%%%%%%%%%%%%%%%%%%%%%%%%%%%%%%%%%%%%%%%%%%%%%%%%%%%%%%%%%%%%

\begin{proof}[Sketch of proof]
  Let $K_i=\sum_{j=1}^{i}\kappa_j$, and let
  $\mb{c}=(\range[n]{\mb{c}})$, where $\mb{c}_i$ is the column vector
  corresponding to the columns $K_{i-1}+1$ to $K_i$ of $\mddet$,
  $1\le{}i\le{}n$ (by convention $K_0=0$).
  Let $\mb{r}=(\range[n]{\mb{r}})$ denote a row vector of
  $\mddet$, where $\mb{r}_i\in\naturals^{\kappa_i}$.
  We use Laplace's Expansion Theorem, to expand $\mddet$ along its first
  $\kappa_1$ columns, then along its next $\kappa_2$ columns, etc. This
  produces an expansion for $\mddet$, consisting of
  $\prod_{i=1}^{n}\binom{K-K_{i-1}}{\kappa_i}$ terms, where each term
  corresponds to a different choice for $\mb{r}$.
  More precisely, each term is, up to a sign, the product of $n$ minors
  $M_i(\tau)$ of $\mddet$ of size $\kappa_i\times\kappa_i$, $1\le{}i\le{}n$,
  where the columns (\resp rows) of $M_i(\tau)$ are the columns
  (\resp rows) in $\mb{c}_i$ (\resp $\mb{r}_i$).
  Among these terms, there are
  $\prod_{i=1}^{n}\binom{K-2n-K_{i-1}}{\kappa_i-2}$ non-vanishing
  terms, each of which is of the form
  $(-1)^{\sigma(\mb{r},\mb{c})+N}\tau^{\theta(\mb{r},\mb{c})}
  \prod_{i=1}^{n}D_i$, where $D_i$ is a generalized Vandermonde determinant.
  We proceed by identifying the unique, with respect to $\mb{r}$,
  term in the expansion, for which $\theta(\mb{r},\mb{c})$ is
  minimal. Denoting by $\mb{\rho}$ the row vector for which this
  minimal value is attained, we show that $\sigma(\mb{\rho},\mb{c})+N$
  is even. We, thus, get $\mddet=\tau^{\theta(\mb{\rho},\mb{c})}
  \prod_{i=1}^{n}D_i+O(\tau^{\theta(\mb{\rho},\mb{c})+1})$.
  Our result then follows by taking the limit
  $\lim_{\tau\to{}0+}\frac{\mddet}{\tau^{\theta(\mb{\rho},\mb{c})}}$,
  and by noticing that the determinants $D_i$, $1\le{}i\le{}n$,
  are strictly positive.
\end{proof}

Let $\mc(t)$, $t>0$, be the $\pddo$-dimensional
moment curve, \ie $\mc(t)=(t,t^2,\ldots,t^{\ddo})$. We are going to
call $\mc_i(t)$, $1\le{}i\le{}r$, the curve $\mc(t)$ embedded in the
$\pddo$-flat $F_i$ of $\reals^d$, where
{
  \setlength\abovedisplayskip{7pt plus 3pt minus 7pt}
  \setlength\belowdisplayskip{7pt plus 3pt minus 7pt}
  \begin{equation}\label{equ:Fi}
    F_i=\{x_j=0\mid{}1\le{}j\le{}r\text{ and }j\ne{}i\},
    \qquad 1\le{}i\le{}r,
  \end{equation}}%
such that the first coordinate of $\mc(t)$ becomes the $i$-coordinate
of $\mc_i(t)$, whereas, for $2\le{}j\le{}\ddo$, the $j$-th coordinate
of $\mc(t)$ becomes the $(j+r-1)$-coordinate of $\mc_i(t)$.
In other words:
{\allowdisplaybreaks
  \setlength\abovedisplayskip{7pt plus 3pt minus 7pt}
  \setlength\belowdisplayskip{0pt plus 3pt minus 7pt}
  \begin{align*}
    \mc_1(t)&=(\overbrace{t,0,0,0,\ldots,0,0}^{r},
    \overbrace{t^2,\ldots,t^{\ddo}}^{d-r}),\\
    \mc_2(t)&=(0,t,0,0,\ldots,0,0,t^2,\ldots,t^{\ddo}),\\
    \mc_3(t)&=(0,0,t,0,\ldots,0,0,t^2,\ldots,t^{\ddo}),\\
    &\ \,\vdots\\
    \mc_r(t)&=(0,0,0,0,\ldots,0,t,t^2,\ldots,t^{\ddo}).
  \end{align*}}%
We next \emph{perturb} the vanishing coordinates of $\mc_i(t)$,
$1\le{}i\le{}r$, to get the $d$-dimensional curve $\mc_i(t;\zeta)$ as
follows: the first (from the left) vanishing coordinate of $\mc_i(t)$
becomes $\zeta t^{d-r+2}$, the second vanishing coordinate of $\mc_i(t)$
becomes $\zeta t^{d-r+3}$, etc., and, finally, the last vanishing
coordinate of $\mc_i(t)$ becomes $\zeta t^d$:
\begin{align*}
  \mc_1(t;\zeta)&=(\overbrace{t,\zeta t^{d-r+2},\zeta t^{d-r+3},
    \zeta t^{d-r+4},\ldots,\zeta t^{d-1},\zeta t^d}^{r},
  \overbrace{t^2,\ldots,t^{\ddo}}^{d-r}),\\
  \mc_2(t;\zeta)&=(\zeta t^{d-r+2},t,\zeta t^{d-r+3},
  \zeta t^{d-r+4},\ldots,\zeta t^{d-1},\zeta t^d,t^2,\ldots,t^{\ddo}),\\
  \mc_3(t;\zeta)&=(\zeta t^{d-r+2},\zeta t^{d-r+3},t,
  \zeta t^{d-r+4},\ldots,\zeta t^{d-1},\zeta t^d,t^2,\ldots,t^{\ddo}),\\
  &\ \,\vdots\\
  \mc_r(t;\zeta)&=(\zeta t^{d-r+2},\zeta t^{d-r+3},\zeta t^{d-r+4},
  \ldots,\zeta t^{d-1},\zeta t^d,t,t^2,\ldots,t^{\ddo}),
\end{align*}
where $\zeta\ge{}0$ is the perturbation parameter (clearly, for
$\zeta=0$, $\mc_i(t;\zeta)$ reduces to $\mc_i(t)$).
Denote by $\mcc_i(t)$ (\resp $\mcc_i(t;\zeta)$) the Cayley embedding
of $\mc_i(t)$ (\resp $\mc_i(t;\zeta)$), \ie:
\begin{equation*}
  \begin{aligned}
    \mcc_i(t)&=\mu_i(\mc_i(t))=(\me_{i-1},\mc_i(t)),\\
    \mcc_i(t;\zeta)&=\mu_i(\mc_i(t;\zeta))=(\me_{i-1},\mc_i(t;\zeta)),
  \end{aligned}
  \qquad t>0, \qquad 1\le{}i\le{}r.
\end{equation*}

For each $i$ with $1\le{}i\le{}r$, choose $n_i$ real numbers
$\alpha_{i,j}$, $j=1,\ldots,n_i$, such that
$0<\alpha_{i,1}<\alpha_{i,2}<\ldots<\alpha_{i,n_i}$.
Let $\tau$ be a strictly positive parameter that is determined
below, and choose $r$ non-negative integers $\range{\nu}$, such that
$\nu_1>\nu_2>\cdots>\nu_r=0$.
Let $\UU_i$, $1\le{}i\le{}r$, be the $\pddo$-dimensional point
sets:
\begin{equation*}%\label{equ:Ui}
  \UU_i=\{\mc_i(\alpha_{i,1}\tau^{\nu_i}),\mc_i(\alpha_{i,2}\tau^{\nu_i}),\ldots,
  \mc_i(\alpha_{i,n_i}\tau^{\nu_i})\}.
\end{equation*}
Since $\UU_i$ consists of points on the $\pddo$-dimensional moment curve
$\mc(t)$, embedded in the $\pddo$-flat $F_i$ of
$\reals^d$, the $\pddo$-polytope $Q_i=\conv(\UU_i)$ is the
cyclic $\pddo$-polytope embedded in $F_i$ (cf. \eqref{equ:Fi}).

Let $\UUC_i=\mu_i(\UU_i)$, $1\le{}i\le{}r$, $\UU=\cup_{i=1}^{r}\UU_i$, 
$\UUC=\cup_{i=1}^{r}\UUC_i$, and denote by $\ptn[U]$ (\resp
$\ptnc[U]$) the partition of $\UU$ (\resp $\UUC$) into its subsets
$\range{\UU}$ (\resp $\range{\UUC}$).
Let $\qQ=\conv(\UUC)$ be the Cayley polytope
of $\range{Q}$, and let $\wW_{\qQ}$ be the set of faces of
$\qQ$ with non-empty intersection with $\Wavg$, i.e.,
$\wW_{\qQ}$ consists of all the faces of $\qQ$, the vertex set
of which is a \spansub[{\ptnc[U]}] of $\UUC$.
The following lemma establishes the first step towards our construction.

\begin{lemma}\label{lem:Qstar}
  There exists a sufficiently small positive value $\tau^\star$ for
  $\tau$, such that
  \[  f_{k-1}(\wW_\qQ)=\ub{k},\qquad r\le{}k\le{}\ltexp{\dup}.\]
\end{lemma}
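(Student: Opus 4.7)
The plan is to leverage the trivial upper bound: since every face in $\wW_\qQ$ has a vertex set that is a \spansub[{\ptnc[U]}] of $\UUC$, and $\ub{k}$ counts precisely the \spansubs[{\ptnc[U]}] of $\UUC$ of size $k$, we have $f_{k-1}(\wW_\qQ)\le\ub{k}$ for all $r\le k\le\ltexp{\dup}$, with equality if and only if every \spansub[{\ptnc[U]}] of size $k$ is the vertex set of a face of $\qQ$. The lemma therefore reduces to showing that, for sufficiently small $\tau>0$, every such \spansub $U=\bigcup_{i=1}^{r} U_i$ with $|U_i|=s_i\ge 1$ and $\sum_i s_i=k$ is indeed the vertex set of a face of $\qQ$.

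To verify that $U$ spans a face, I would exhibit a supporting affine functional $h$ on $\reals^{r-1}\times\reals^d$ that vanishes on $U$ and is strictly positive on $\UUC\setminus U$. The key observation is that the restriction of any affine functional to $\mcc_i(t)$ is a univariate polynomial $p_i(t)$ of degree at most $d-r+1$, whose constant and linear coefficients are allowed to depend on $i$ (they absorb the slab indicator $\me_{i-1}$ and the $i$-th coordinate of $\reals^d$), while the coefficients of $t^2,\ldots,t^{d-r+1}$ are shared across all $r$ slabs. Requiring $p_i$ to have roots at the parameter values $\alpha_{i,j}\tau^{\nu_i}$ with $(i,j)\in U$ is exactly the condition for $h$ to vanish on $U$; strict positivity on $\UUC\setminus U$ is then verified one test vertex at a time. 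The constraint $k\le\ltexp{\dup}$ is precisely what gives enough ``room'' in the polynomial ansatz to impose the $k$ prescribed roots while retaining sufficient freedom to control the sign on $\UUC\setminus U$.

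The existence of such a compatible family $(p_i)$, together with the correct sign at any fixed test vertex $v\in\UUC\setminus U$, can be encoded as the non-vanishing, with a specific sign, of a single determinant $H_U(v)$. Its columns are drawn from the vertices of $U$ (augmented in a standard way to cope with the rigid structure of the polynomial ansatz), from $v$ itself, and from a few auxiliary ``points at large parameter''. After substituting the explicit coordinates $\alpha_{i,j}\tau^{\nu_i}$ and performing standard row and column operations, $H_U(v)$ can be brought into precisely the shape of the determinant $\mddet$ treated in Lemma \ref{lem:sign-generic-det}: the block sizes $\kappa_i$ are built from the $s_i$ with a suitable augmentation, the real parameters $x_{i,j}$ correspond to the $\alpha_{i,j}$'s, and the exponents $\beta_i$ correspond to the $\nu_i$'s, with the strict chain $\nu_1>\nu_2>\cdots>\nu_r=0$ supplying the monotonicity required by that lemma.

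Lemma \ref{lem:sign-generic-det} then yields, for each pair $(U,v)$, a threshold $\tau_{U,v}>0$ below which $H_U(v)$ is strictly positive, placing $v$ on the correct side of the would-be supporting hyperplane through $U$. Because the collection of such pairs, as $U$ ranges over the finitely many \spansubs of $\UUC$ of size in $\{r,\ldots,\ltexp{\dup}\}$ and $v$ over $\UUC\setminus U$, is finite, setting $\tau^\star=\min_{U,v}\tau_{U,v}>0$ yields a single choice of $\tau$ that works uniformly, and the lemma follows. The main obstacle, as I see it, is the third step: identifying the correct augmentation of $U$ by auxiliary columns, carrying out the precise sequence of elementary operations that brings $H_U(v)$ into the canonical form of $\mddet$, and verifying that the resulting sign comes out \emph{consistently} across all test vertices, so that a single affine functional $h$ separates $U$ from all of $\UUC\setminus U$ simultaneously.
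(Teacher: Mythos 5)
Your outline follows the same strategy as the paper: reduce to showing that every spanning subset $U$ of size $k\le\ltexp{\dup}$ is the vertex set of a face of $\qQ$, exhibit the supporting hyperplane as the zero set of a determinant $H_U(\mathbf{x})$, reduce the sign question to Lemma~\ref{lem:sign-generic-det} for each test vertex $\mbu\in\UU\setminus U$, and take a minimum over the finitely many pairs $(U,\mbu)$. Your observation that an affine functional on the Cayley space restricts to a polynomial of degree at most $\ddo$ on each $\mcc_i(t)$, with per-slab constant and linear coefficients but shared higher-order coefficients, is also correct and is the reason the determinant has the block structure it does.

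However, you explicitly defer what is in fact the central technical idea, and without it the argument does not close. You write that the columns of $H_U(v)$ are ``drawn from the vertices of $U$ (augmented in a standard way)'' and that the main obstacle is ``verifying that the resulting sign comes out consistently across all test vertices.'' The paper's resolution is a specific device: each chosen parameter $t_{i,j}=\alpha_{i,j}\tau^{\nu_i}$ is paired with an $\epsilon$-partner $t_{i,j}^{\epsilon}=(\alpha_{i,j}+\epsilon)\tau^{\nu_i}$, with $\epsilon>0$ chosen so small that $\alpha_{i,j}^{\epsilon}<\alpha_{i,j+1}$, and \emph{both} columns $\mcc_i(t_{i,j})$ and $\mcc_i(t_{i,j}^{\epsilon})$ are inserted into $H_U$, followed by $\dup-2k$ trailing columns at $\mcc_r(M),\mcc_r(2M),\ldots$ with $M$ larger than all other parameters. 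This doubling is exactly what makes the sign of $H_U(\mbu)$ independent of where $\mbu$ lies relative to the chosen parameters: bringing $\mcc_i(t_{i,\lambda})$ into sorted position among the pairs $t_{i,j},t_{i,j}^{\epsilon}$ always costs an \emph{even} number of column swaps, so after the row reduction that puts $H_U(\mbu)$ in the canonical form $\mddet$, the prefactor $(-1)^{R}$ is preserved and Lemma~\ref{lem:sign-generic-det} applies with a uniform sign. Without this pairing (or some equivalent double-root/tangency mechanism), distinct test vertices in the same $\UU_i$ could produce determinants of opposite sign, and no single affine functional would separate $U$ from all of $\UU\setminus U$. So the proposal is on the right track and correctly identifies where the difficulty lies, but the $\epsilon$-pairing of columns is a genuine missing ingredient rather than a routine detail.
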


\begin{proof}
  To simplify the notation used in the proof, we
  identify $\UUC_i$, $\UUC$ and $\ptnc[U]$ with their pre-images under
  the Cayley embedding, namely, $\UU_i$, $\UU$ and $\ptn[U]$, respectively.

  Let $\alpha_{i,j}^\epsilon=\alpha_{i,j}+\epsilon$,
  $t_{i,j}=\alpha_{i,j}\tau^{\nu_i}$,
  $t_{i,j}^\epsilon=\alpha_{i,j}^\epsilon\tau^{\nu_i}$, where
  $1\le{}j\le{}n_i$, $1\le{}i\le{}r$, and $\epsilon>0$. The value of
  $\epsilon$ is chosen such that $\alpha_{i,j}^\epsilon<\alpha_{i,j+1}$,
  for all $1\le{}j<n_i$, and for all $1\le{}i\le{}r$.
  Finally, let $M$ be a positive real number such that
  $M>\alpha_{r,n_r}^\epsilon=\alpha_{r,n_r}^\epsilon\tau^{\nu_r}
  =t_{r,n_r}^\epsilon$ (recall that $\nu_r=0$).

  Choose a \spansub[{\ptn[U]}] $U$ of $\UU$ of size $k$, and
  denote by $k_i$ the cardinality of $U_i=U\cap{}\UU_i$;
  clearly, $\sum_{i=1}^{r}k_i=k$.
  Let $\mcc_i(t_{i,j_{i,1}}),\mcc_i(t_{i,j_{i,2}}),\ldots,\mcc_i(t_{i,j_{i,k_i}})$,
  be the vertices in $U_i$, where $j_{i,1}<j_{i,2}<\ldots<j_{i,k_i}$ and
  $1\le{}i\le{}r$.
  Let $\mb{x}=(\range[\dup]{x})$ and define the
  $(d+r)\times(d+r)$ determinant $H_U(\mb{x})$ as
  follows\footnote{\Hexample}:
  \begin{equation}\label{equ:H-def}
    \begin{aligned}
      H_U(\mb{x})=
      (-1)^R\,\bigg|&
        \begin{smallmatrix}
          1&1&1&\cdots&1&1&1&1&\cdots&1&1\\[3pt]
          \mb{x}&\mcc_1(t_{1,j_{1,1}})&\mcc_1(t_{1,j_{1,1}}^\epsilon)&\cdots
          &\mcc_1(t_{1,j_{1,k_1}})&\mcc_1(t_{1,j_{1,k_1}}^\epsilon)
          &\mcc_2(t_{2,j_{2,1}})&\mcc_2(t_{2,j_{2,1}}^\epsilon)&\cdots
          &\mcc_2(t_{2,j_{2,k_2}})&\mcc_2(t_{2,j_{2,k_2}}^\epsilon)
        \end{smallmatrix}\\[6pt]
        &
        \begin{smallmatrix}
          \cdots&1&1&\cdots&1&1\\[3pt]
          \cdots&\mcc_{r-1}(t_{r-1,j_{r-1,1}})&\mcc_{r-1}(t_{r-1,j_{r-1,1}}^\epsilon)
          &\cdots
          &\mcc_{r-1}(t_{r-1,j_{r-1,k_{r-1}}})
          &\mcc_{r-1}(t_{r-1,j_{r-1,k_{r-1}}}^\epsilon)\\
        \end{smallmatrix}\\[6pt]
        &
        \begin{smallmatrix}
          1&1&\cdots&1&1&1&1&\cdots&1\\[3pt]
          \mcc_r(t_{r,j_{r,1}})&\mcc_r(t_{r,j_{r,1}}^\epsilon)
          &\cdots&\mcc_r(t_{r,j_{r,k_r}})&\mcc_r(t_{r,j_{r,k_r}}^\epsilon)
          &\mcc_r(M)&\mcc_r(2M)
          &\cdots&\mcc_r((\dup-2k)M)
        \end{smallmatrix}
      \bigg|,
      \end{aligned}
  \end{equation}
  where $R=\frac{r(r-1)}{2}$.
  We can alternatively describe $H_U(\mb{x})$ as follows:
  \begin{enumerate}
  \item
    The first column of $H_U(\mb{x})$ is $\binom{1}{\mb{x}}$.
  \item
    For $i$ ranging from $1$ to $r$, and for $\lambda$ ranging from $1$ to
    $k_i$, the next $k_i$ \emph{pairs of columns} of
    $H_U(\mb{x})$ are $\binom{1}{\mcc_i(t_{i,j_{i,\lambda}})}$ and
    $\binom{1}{\mcc_i(t_{i,j_{i,\lambda}}^\epsilon)}$.
  \item
    For $\lambda$ ranging from $1$ to $\dup-2k$,
    the last $\dup-2k$ columns of $H_U(\mb{x})$ are
    $\binom{1}{\mcc_r(\lambda M)}$. Notice that if $k=\lexp{\dup}$ and
    $\dup$ is even, this last category of columns of $H_U(\mb{x})$ does
    not exist.
  \end{enumerate}

  The equation $H_U(\mb{x})=0$ is the equation of a
  hyperplane in $\reals^{\dup}$ that passes through the points in
  $U$. We are going to show that, for any choice of $U$,
  and for all vertices $\mbu$ in $\UU\sm{}U$,
  we have $H_U(\mbu)>0$ for sufficiently small $\tau$.
  
  Suppose we have some vertex $\mbu$ of $\UU$ such that
  $\mbu\in{}\UU_i\sm{}U$. Then, $\mbu=\mcc_i(t_{i,\lambda})$,
  $t_{i,\lambda}=\alpha_{i,\lambda}\tau^{\nu_i}$, for some
  $\lambda\nin\{j_{i,1},j_{i,2},\ldots,j_{i,k_i}\}$.
  Then $H_U(\mbu)$ becomes:
  \begin{equation*}
    \begin{aligned}
      H_U(\mbu)=
      (-1)^R\,\bigg|&
        \begin{smallmatrix}
          1&1&1&\cdots&1&1&1&1&\cdots&1&1\\[3pt]
          \mcc_i(t_{i,\lambda})&\mcc_1(t_{1,j_{1,1}})&\mcc_1(t_{1,j_{1,1}}^\epsilon)
          &\cdots
          &\mcc_1(t_{1,j_{1,k_1}})&\mcc_1(t_{1,j_{1,k_1}}^\epsilon)
          &\mcc_2(t_{2,j_{2,1}})&\mcc_2(t_{2,j_{2,1}}^\epsilon)&\cdots
          &\mcc_2(t_{2,j_{2,k_2}})&\mcc_2(t_{2,j_{2,k_2}}^\epsilon)
        \end{smallmatrix}\\[6pt]
        &
        \begin{smallmatrix}
          \cdots&1&1&\cdots&1&1\\[3pt]
          \cdots&\mcc_{r-1}(t_{r-1,j_{r-1,1}})&\mcc_{r-1}(t_{r-1,j_{r-1,1}}^\epsilon)
          &\cdots
          &\mcc_{r-1}(t_{r-1,j_{r-1,k_{r-1}}})
          &\mcc_{r-1}(t_{r-1,j_{r-1,k_{r-1}}}^\epsilon)\\
        \end{smallmatrix}\\[6pt]
        &
        \begin{smallmatrix}
          1&1&\cdots&1&1&1&1&\cdots&1\\[3pt]
          \mcc_r(t_{r,j_{r,1}})&\mcc_r(t_{r,j_{r,1}}^\epsilon)&
          &\cdots&\mcc_r(t_{r,j_{r,k_r}})&\mcc_r(t_{r,j_{r,k_r}}^\epsilon)
          &\mcc_r(M)&\mcc_r(2M)
          &\cdots&\mcc_r((\dup-2k)M)
        \end{smallmatrix}
      \bigg|.
      \end{aligned}
  \end{equation*}
  Observe now that we can transform $H_U(\mbu)$ in the form
  of the determinant $\mddet$ of Lemma \ref{lem:sign-generic-det},
  where $n\sub{}r$, $\kappa_i\sub{}2k_i+1$, $\kappa_r\sub{}2k_r+\dup-2k$, 
  $\kappa_j\sub{}2k_j$ for $1\le{}j<r$ and $j\ne{}i$,
  and $\beta_j\sub{}\nu_j$ for $1\le{}j\le{}r$, by means of the
  following determinant transformations:
  \begin{enumerate}
  \item
    By subtracting rows $2$ to $r$ of $H_U(\mbu)$ from its first row.
  \item
    By shifting the first column of $H_U(\mbu)$ to the right via
    an even number of column swaps.
    More precisely, to transform $H_U(\mbu)$ in the form of
    $\mddet$, we need to shift the first column of $H_U(\mbu)$ to
    the right so that the values $t_{i,\lambda},t_{i,j_{i,1}},t_{i,j_{i,1}}^\epsilon,
    t_{i,j_{i,2}},t_{i,j_{i,2}}^\epsilon,\ldots,t_{i,j_{i,k_i}},t_{i,j_{i,k_i}}^\epsilon$
    appear consecutively in the columns of $H_U(\mbu)$ and in
    increasing order. To do that we always need an even number of
    column swaps, due to the way we have chosen $\epsilon$.
    More precisely, we first need to shift the first
    column of $H_U(\mbu)$ through the $2\sum_{j=1}^{i-1}k_j$ columns
    to its right; this is, obviously, done via an even number of
    column swaps. Consider the following cases:
    \begin{itemize}
    \item
      If $\lambda<j_{i,1}$, then we are done: $H_U(\mbu)$ is in
      the desired form.
    \item
      If $\lambda>j_{i,k_i}$, notice that due to the way we have chosen
      $\epsilon$, we have $t_{i,\lambda}>t_{i,j_{i,k_i}}^\epsilon$;
      therefore, we need to perform another $2k_i$ column swaps in order
      to shift $\binom{1}{\mcc_i(t_{i,\lambda})}$ to its final
      place, \ie to the right of $\binom{1}{\mcc_i(t_{i,j_{i,k_i}}^\epsilon)}$.
      In other words, in this case $H_U(\mbu)$ can be transformed to
      the form of $\mddet$ with a total of $2\sum_{j=1}^{i}k_j$ column
      swaps.
    \item
      Finally, if $j_{i,1}<\lambda<j_{i,k_i}$, there exists some $\xi$ with
      $1\le{}\xi<k_i$, such that $j_{i,\xi}<\lambda<j_{i,\xi+1}$.
      Since $t_{i,\lambda}>t_{i,j_{i,\xi}}^\epsilon$, due to the way we
      have chosen $\epsilon$, we need another $2\xi$
      column swaps to place $\binom{1}{\mcc_i(t_{i,\lambda})}$
      to the right of $\binom{1}{\mcc_i(t_{i,j_{i,\xi}}^\epsilon)}$.
      Hence, in this case, $H_U(\mbu)$ can be transformed to the form
      of $\mddet$ with a total of $2(\xi+\sum_{j=1}^{i-1}k_j)$ column
      swaps.
    \end{itemize}
  \end{enumerate}
  Applying now Lemma \ref{lem:sign-generic-det}, we deduce that there
  exists a value $\tau_0$ for $\tau$, such that for all
  $\tau\in(0,\tau_0)$, the determinant $H_U(\mbu)$ is strictly
  positive.

  We thus conclude that, for any specific choice of $U$, and for
  any specific vertex $\mbu\in{}\UU\sm{}U$,
  there exists some $\tau_0>0$ (cf. Lemma \ref{lem:sign-generic-det})
  that depends on $\mbu$ and $U$, such that for all
  $\tau\in(0,\tau_0)$, $H_U(\mbu)>0$.
  Since for each $k$ with $r\le{}k\le{}\lexp{\dup}$, the number
  of \spansubs[{\ptn[U]}] $U$ of size $k$ of $\UU$ is
  $\ub{k}$, while for each such subset $U$ we need
  to consider the $(\sum_{i=1}^{r}n_i-k)$ vertices in $\UU\sm{}U$, it
  suffices to consider a positive value $\tau^\star$ for $\tau$ that
  is small enough, so that all
  \begin{equation*}
    \sum_{k=r}^{\lexp{\dup}}\Big(\sum_{i=1}^{r}n_i-k\Big)\,\ub{k}
  \end{equation*}
  possible determinants $H_U(\mbu)$ are strictly positive. 
  For $\tau\sub\tau^\star$, our analysis above immediately implies that
  \emph{for each} \spansub[{\ptn[U]}] $U$ of $\UU$
  the equation $H_{U}(\mb{x})=0$, $\mb{x}\in\reals^{\dup}$, is the
  equation of a supporting hyperplane for $\qQ$ passing through the 
  vertices of $U$, and those only. In other words, every
  \spansub[{\ptn[U]}] $U$ of $\UU$, where
  $|U|=k$ and $r\le{}k\le{}\lexp{\dup}$,
  defines a $(k-1)$-face of $\qQ$, which means that 
  $f_{k-1}(\wW_{\qQ})=\ub{k}$, for all $r\le{}k\le{}\lexp{\dup}$.
\end{proof}

We assume we have chosen $\tau$ to be equal to $\tau^\star$, and
call $\UU_i^\star$, $Q_i^\star$,
$1\le{}i\le{}r$, the corresponding vertex sets and
$\pddo$-polytopes. Let $\UU^\star=\cup_{i=1}^{r}\UU_i^\star$, and call
$\qQ^\star$ the Cayley polytope of $Q_1^\star,Q_2^\star,\ldots,Q_r^\star$.
We are going to perturb the vertex sets $\UU_i^\star$ to get the vertex
sets $\VV_i$, $1\le{}i\le{}r$, by considering vertices on the curves
$\mc_i(t;\zeta)$ with $\zeta>0$, instead of the curves $\mc_i(t)$.
More precisely, define the sets $\VV_i$, $1\le{}i\le{}r$, as follows:
\begin{equation*}%\label{equ:Vi}
  \VV_i=\{\mc_i(\alpha_{i,1}(\tau^\star)^{\nu_i};\zeta),
  \mc_i(\alpha_{i,2}(\tau^\star)^{\nu_i};\zeta),\ldots,
  \mc_i(\alpha_{i,n_i}(\tau^\star)^{\nu_i};\zeta)\},
\end{equation*}
where $\zeta>0$. Let $\VV=\cup_{i=1}^{r}\VV_i$, call $\ptn$ the
partition of $\VV$ into its subsets $\range{\VV}$,
and let $P_i$ be the $d$-polytope whose vertex set is $\VV_i$.
It is easy to verify that:

\begin{claim}\label{claim:neighborly}
  For any $\zeta>0$, $P_i$ is a neighborly $d$-polytope.
\end{claim}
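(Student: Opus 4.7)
The plan is to show that, for any $\zeta>0$, the curve $\mc_i(\cdot;\zeta)$ is the image of the standard $d$-dimensional moment curve $\mc(t)=(t,t^2,\ldots,t^d)$ under an invertible linear map of $\reals^d$. Once this is established, $P_i$ is affinely isomorphic to a cyclic $d$-polytope with $n_i$ vertices, and cyclic $d$-polytopes are well known to be neighborly.

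Concretely, I will read off from the componentwise definition of $\mc_i(t;\zeta)$ given in the excerpt that its $d$ entries are, in the order prescribed by $i$, the monomials $t$ (in slot $i$), the monomials $t^2,t^3,\ldots,t^{d-r+1}$ (in slots $r+1,\ldots,d$), and the monomials $\zeta t^{d-r+2},\zeta t^{d-r+3},\ldots,\zeta t^{d}$ (in the remaining slots among $1,\ldots,r$). In particular, the multiset of exponents of $t$ appearing across the $d$ coordinates is exactly $\{1,2,\ldots,d\}$, each with multiplicity one, and every scalar coefficient is strictly positive (being either $1$ or $\zeta>0$). Let $T_i\colon\reals^d\to\reals^d$ be the composition of the coordinate permutation that reorders $(t,t^2,\ldots,t^d)$ into the order in which the monomials appear along $\mc_i(t;\zeta)$, followed by the diagonal rescaling that multiplies the appropriate entries by $\zeta$; then $T_i$ is an invertible linear map and $T_i(\mc(t))=\mc_i(t;\zeta)$ for all $t>0$. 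Consequently $P_i=T_i(C)$, where $C=\conv\{\mc(\alpha_{i,j}(\tau^\star)^{\nu_i}):1\le j\le n_i\}$ is a cyclic $d$-polytope on $n_i$ distinct positive parameters; since invertible linear maps preserve both dimension and the entire face lattice, and since cyclic $d$-polytopes are $\lfloor d/2\rfloor$-neighborly, $P_i$ is a neighborly $d$-polytope, as claimed.

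The only point needing any attention is the indexing bookkeeping verifying that the exponents of $t$ across all $d$ coordinates of $\mc_i(t;\zeta)$ exhaust $\{1,\ldots,d\}$ without repetition; this is a direct inspection of the formulas for the $\mc_i$'s displayed earlier and presents no genuine obstacle. The implicit condition $n_i\ge d+1$, needed so that $C$ (and hence $P_i$) is genuinely $d$-dimensional, is already built into the hypothesis that $P_i$ is a $d$-polytope.
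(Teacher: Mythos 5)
Your proof is correct, and it takes a genuinely cleaner route than the paper. The crucial observation you make explicit --- that the $d$ coordinates of $\mc_i(t;\zeta)$ are exactly the monomials $t,t^2,\ldots,t^d$ up to a coordinate permutation and positive diagonal rescaling --- lets you identify $P_i$ at once as a linear image of a cyclic $d$-polytope, so its full-dimensionality and neighborliness follow from the standard theory of cyclic polytopes and the invariance of the face lattice under invertible linear maps. The paper's proof is built on exactly the same structural fact but works it out by hand: the sign $(-1)^{i-1+(r-1)(d-r)}$ and the factor $\zeta^{r-1}$ in the paper's formula $F(\zeta)=(-1)^{i-1+(r-1)(d-r)}\zeta^{r-1}\VD{t}$ are precisely the sign of your permutation and the determinant of your diagonal rescaling, and the neighborliness argument in the paper re-derives Gale's evenness condition for this particular reparametrized moment curve via a root-counting argument for the polynomial $H_V(\mc_i(t))$. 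Your approach is shorter and delegates the computational burden to well-known properties of cyclic polytopes; the paper's is more self-contained but essentially repeats a classical proof. The one minor point of presentation in your write-up is the remark that the condition $n_i\ge d+1$ is ``built into the hypothesis that $P_i$ is a $d$-polytope'' --- since the claim is precisely that $P_i$ \emph{is} a $d$-polytope, it would be cleaner to say instead that the construction tacitly assumes $n_i\ge d+1$ (as it must, for the problem to be non-trivial), and that under this assumption the cyclic polytope $C$ is genuinely $d$-dimensional, hence so is $P_i=T_i(C)$.
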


\begin{proof}
  Recall that the vertices of $P_i$ are taken from the moment-like curve:
  \begin{equation}
    \mc_i(t;\zeta)=(\overbrace{\zeta t^{d-r+2},\zeta t^{d-r+3},\ldots,
      \zeta t^{d-r+i}}^{i-1},t,
    \overbrace{\zeta t^{d-r+i+1},\ldots,\zeta t^{d-1},\zeta t^d}^{r-i},
    \overbrace{t^2,\ldots,t^{d-r+1}}^{d-r}).
  \end{equation}
  Let $t_{j}=\alpha_{i,j}(\tau^\star)^{\nu_i}$,
  $t_{j}^\epsilon=t_{j}+\epsilon$, $1\le{}j\le{}n_i$, and
  $M>t_{n_i}^\epsilon$, where $\epsilon$ is a small positive constant
  chosen so that $t_{j}^\epsilon<t_{j+1}$, for all $1\le{}j<n_i$.

  We will first show that $P_i$ is $d$-dimensional. Consider a subset
  $V$ of $\VV_i$ of size $d+1$, and let
  $\mc_i(t_{j_1};\zeta),\mc_i(t_{j_2};\zeta),\ldots,\mc_i(t_{j_{d+1}};\zeta)$,
  be the vertices in $V$, where $j_1<j_2<\ldots<j_{d+1}$.
  Define the $(d+1)\times(d+1)$ determinant
  $F(\zeta)$, corresponding to the volume of $\conv(V)$ in $\reals^d$:
  \begin{equation*}
    F(\zeta)=
      \begin{vmatrix}
        1&1&1&\cdots&1\\[3pt]
        \mc_i(t_{j_1};\zeta)&\mc_i(t_{j_2};\zeta)
        &\mc_i(t_{j_3};\zeta)&\cdots
        &\mc_i(t_{j_{d+1}};\zeta)
      \end{vmatrix}.
  \end{equation*}
  It is easy to verify that
  $F(\zeta)=(-1)^{i-1+(r-1)(d-r)}\zeta^{r-1}\VD{t}$, where
  $\mb{t}=(t_{j_1},t_{j_2},\ldots,t_{j_{d+1}})$; recall that $\VD{x}$
  denotes the Vandermonde determinant corresponding to the vector
  $\mb{x}$ (cf. \eqref{equ:VD}). Since the elements in $\mb{t}$ are in
  strictly increasing order, we immediately conclude that
  $\VD{t}>0$. This further implies that $F(\zeta)\ne{}0$, for any
  $\zeta>0$. Hence, the polytope $P_i$ is $d$-dimensional, since it
  contains at least one $d$-dimensional simplex.

  We will now show that $P_i$ is neighborly.
  Consider a subset $V$ of $\VV_i$ of size $k\le\lexp{d}$, and let
  $\mc_i(t_{j_1};\zeta),\mc_i(t_{j_2};\zeta),\ldots,\mc_i(t_{j_k};\zeta)$
  be the vertices of $P_i$ in $V$, where $j_1<j_2<\ldots<j_k$.
  Define the $(d+1)\times(d+1)$ determinant $H_V(\mb{x})$,
  $\mb{x}\in\reals^d$, as follows\footnote{For $d$ even and
    $k=\lexp{d}$ the columns of $H_V(\mb{x})$ corresponding to $M, 2M,
    \ldots, (d-2k)M$ do not exist.}:
  \begin{equation*}
    H_V(\mb{x})=\left|
      \begin{smallmatrix}
        1&1&1&1&1&\cdots&1&1&1&1&\cdots&1\\[3pt]
        \mb{x}
        &\mc_i(t_{j_1})&\mc_i(t_{j_1}^\epsilon)
        &\mc_i(t_{j_2})&\mc_i(t_{j_2}^\epsilon)&\cdots
        &\mc_i(t_{j_k})&\mc_i(t_{j_k}^\epsilon)
        &\mc_i(M)&\mc_i(2M)&\cdots&\mc_i((d-2k)M)
      \end{smallmatrix}\right|.
  \end{equation*}
  Observe that
  \begin{equation*}
    H_V(\mc_i(t))=(-1)^{i-1+(r-1)(d-r)}\zeta^{r-1}\VD{T},
  \end{equation*}
  where
  $\mb{T}=(t,t_{j_1},t_{j_1}^\epsilon,t_{j_2},t_{j_2}^\epsilon,\ldots,
  t_{j_k},t_{j_k}^\epsilon,M, 2M,\ldots,(d-2k)M)$.
  In other words, for any $\zeta>0$, $H_V(\mc_i(t))$ is a polynomial in
  $t$ of degree $d$, which has $d$ distinct real roots, namely,
  $t_{j_1},t_{j_2},\ldots,t_{j_k}$,
  $t_{j_1}^\epsilon,t_{j_2}^\epsilon,\ldots,t_{j_k}^\epsilon$, and
  $M, 2M,\ldots,(d-2k)M$. Observe, also, that there always exists an
  even number of roots\footnote{For $d$ even and $k=\lexp{d}$ the $d$
    real roots of $H_V(\mc_i(t))$ are $t_{j_1},t_{j_2},\ldots,t_{j_k}$,
    $t_{j_1}^\epsilon,t_{j_2}^\epsilon,\ldots,t_{j_k}^\epsilon$.}
  of $H_V(\mc_i(t))$ between $t=t_{\mu}$ and
  $t=t_{\xi}$ for any $\mu,\xi$ with $1\le{}\mu\ne\xi\le{}n_i$ and
  $\mu,\xi\nin\{\range[k]{j}\}$. This immediately implies that
  $H_V(\mc_i(t))$ has always the same sign for any $t_\ell$ with
  $\ell\nin\{\range[k]{j}\}$, which further implies that $H_V(\mbv)$
  has the same sign for any $\mbv\in{}\VV_i\sm{}V$. In other words,
  $H_V(\mb{x})=0$ is the equation of a supporting hyperplane of $P_i$,
  passing through the vertices of $V$, and those only.

  Since we have chosen $V$ arbitrarily, the same holds for any
  $V\subseteq{}\VV_i$ with $|V|=k\le{}\lexp{d}$. Thus, for any subset
  $V$ of $\VV_i$ of size $k\le\lexp{d}$, $V$ defines a $(k-1)$-face of
  $P_i$, \ie $P_i$ is neighborly.
\end{proof}

% for any $\zeta>0$, $P_i$ is a neighborly
%$d$-polytope (see Section \ref{app:neighborly} of the Appendix).

%
Let $\pP=\conv(\cC(\range{\VV}))$ be the Cayley polytope of
$\range{P}$, and let $\wW_\pP$ be the set of faces of $\pP$
with non-empty intersection with the $d$-flat $\Wavg$ of
$\reals^{r-1}\times\reals^d$ (cf. \eqref{equ:Wflat-def}), \ie
$\wW_\pP$ consists of all the faces of $\pP$, the vertex set of
which is a \spansub[{\ptnc[V]}] of $\VVC$, where
$\ptnc[V]=\{\range{\VVC}\}$, $\VVC=\cup_{i=1}^{r}\VVC_i$, and
$\VVC_i=\mu_i(\VV_i)$, for $1\le{}i\le{}r$.

The following lemma establishes the second, and final, step of our
construction.

\begin{lemma}\label{lem:Pstar}
  There exists a sufficiently small positive value $\zeta^{\lozenge}$ for
  $\zeta$, such that
  \[ f_{k-1}(\wW_\pP)=\ub{k},\qquad r\le{}k\le{}\ltexp{\dup}.\]
\end{lemma}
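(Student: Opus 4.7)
The plan is to mimic the proof of Lemma \ref{lem:Qstar}, upgrading its hyperplane-realisability argument for $\qQ^\star$ to the full-dimensional Cayley polytope $\pP$ by a continuity-in-$\zeta$ argument. Since $\tau$ has already been fixed to $\tau^\star$, each vertex of $\pP$ is a polynomial in $\zeta$ that specialises at $\zeta=0$ to the corresponding vertex of $\qQ^\star$; this polynomial dependence is precisely the bridge that lets us transport the positivity we already control to the positivity we need.

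First, for every \spansub[{\ptnc[V]}] $V$ of $\VVC$ of size $k$ with $r\le k\le\ltexp{\dup}$, I would define a $(d+r)\times(d+r)$ determinant $F_V(\mb{x};\zeta)$ by the same recipe as $H_U(\mb{x})$ in \eqref{equ:H-def}, but with every Cayley embedding $\mcc_i(t)$ (and each filler $\mcc_r(\lambda M)$) replaced by its perturbed counterpart $\mcc_i(t;\zeta)$ (respectively $\mcc_r(\lambda M;\zeta)$); the Fexample in the appendix illustrates this construction for $d=8$, $r=3$, $k=4$. By construction, $F_V(\mb{x};\zeta)$ is a polynomial in $\mb{x}$ and $\zeta$ satisfying $F_V(\mb{x};0)=H_U(\mb{x})$, where $U$ is the \spansub[{\ptn[U]}] of $\UU^\star$ naturally identified with $V$ through the vertex indexing. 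Claim \ref{claim:neighborly} ensures each $P_i$ is a $d$-polytope, so $\pP$ has full dimension $\dup$ in $\reals^{r-1}\times\reals^d$, and for every $\zeta>0$ the equation $F_V(\mb{x};\zeta)=0$ cuts out an honest hyperplane of $\reals^{\dup}$ passing through the $\dup$ points tallied in the determinant.

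The heart of the argument is a continuity step. Fix $V$ and a vertex $\mbv=\mcc_i(t_{i,\lambda};\zeta)\in\VVC\sm V$, and write $\mbu=\mcc_i(t_{i,\lambda})$ for its $\zeta=0$ limit. Substituting $\mb{x}=\mbv$ into $F_V$, the resulting quantity $F_V(\mbv;\zeta)$ is a polynomial in $\zeta$ whose value at $\zeta=0$ is $F_V(\mbu;0)=H_U(\mbu)$, which is strictly positive by Lemma \ref{lem:Qstar}. By continuity there is a threshold $\zeta_0(V,\mbv)>0$ such that $F_V(\mbv;\zeta)>0$ for all $\zeta\in(0,\zeta_0(V,\mbv))$. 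Only finitely many pairs $(V,\mbv)$ need to be checked---there are $\ub{k}$ spanning subsets $V$ of size $k$ and at most $|\VVC|-k$ candidate vertices $\mbv$ for each, aggregated over $r\le k\le\ltexp{\dup}$---so setting $\zeta^\lozenge$ equal to the minimum of all these finitely many thresholds works uniformly. For any $\zeta\in(0,\zeta^\lozenge)$, the equation $F_V(\mb{x};\zeta)=0$ then defines a supporting hyperplane of $\pP$ whose contact with $\pP$ is exactly the vertex set $V$, so $V$ is the vertex set of a $(k-1)$-face of $\pP$. Since any such face contains the point with barycentric weights $1/r$ on one vertex from each $\VVC_i$, which lies in $\Wavg$, the face belongs to $\wW_\pP$. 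Combined with the trivial inequality $f_{k-1}(\wW_\pP)\le\ub{k}$ (every face of $\wW_\pP$ has a spanning vertex set), we obtain $f_{k-1}(\wW_\pP)=\ub{k}$ for all $r\le k\le\ltexp{\dup}$.

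The main obstacle will be the careful term-by-term verification that $F_V(\mb{x};0)=H_U(\mb{x})$: one has to check that the entries of $F_V$ arising from perturbed coordinates are precisely the ones carrying a factor of $\zeta$ and hence vanish at $\zeta=0$, while the unperturbed entries survive unchanged. Once this structural compatibility between the unperturbed determinant $H_U$ and its perturbation $F_V$ is made precise, the rest of the argument is a routine continuity-plus-finiteness reduction to the positivity already established in Lemma \ref{lem:Qstar}.
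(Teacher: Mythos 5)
Your proposal matches the paper's proof essentially step for step: you define the perturbed determinant $F_V(\mb{x};\zeta)$ (identical to the paper's equation for $F_V$), observe that $F_V(\mbv;\zeta)$ is polynomial in $\zeta$ with $F_V(\mbv;0)=H_{U^\star}(\mbu^\star)>0$ by Lemma~\ref{lem:Qstar}, deduce a per-pair threshold $\zeta_0(V,\mbv)$, and take the minimum over the finitely many $(V,\mbv)$ to obtain $\zeta^\lozenge$. This is exactly the continuity-plus-finiteness reduction the paper uses, so the proposal is correct and takes the same route.
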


\begin{proof}
  As in the proof of Lemma \ref{lem:Qstar}, and in order to simplify
  the notation used in the proof,
  we identify $\VVC_i$, $\VVC$ and $\ptnc$ with their pre-images under
  the Cayley embedding, namely, $\VV_i$, $\VV$ and $\ptn$, respectively.

  Similarly to what we have done in the proof of Lemma \ref{lem:Qstar},
  let $\alpha_{i,j}^\epsilon=\alpha_{i,j}+\epsilon$,
  $t_{i,j}=\alpha_{i,j}(\tau^\star)^{\nu_i}$, 
  $t_{i,j}^\epsilon=\alpha_{i,j}^\epsilon(\tau^\star)^{\nu_i}$,
  $1\le{}j\le{}n_i$, $1\le{}i\le{}r$, and $M$ be a positive real
  number such that
  $M>\alpha_{r,n_r}^\epsilon=t_{r,n_r}^\epsilon$,
  where $\epsilon>0$ is chosen such that
  $\alpha_{i,j}^\epsilon<\alpha_{i,j+1}$, for all $1\le{}j<n_i$, and
  for all $1\le{}i\le{}r$.

  Choose a \spansub $V$ of $\VV$ of size $k$, and
  denote by $k_i$ the cardinality of $V_i=V\cap{}\VV_i$,
  $1\le{}i\le{}r$.
  Considering $\zeta$ as a small positive parameter, let
  $\mcc_i(t_{i,j_{i,1}};\zeta),\mcc_i(t_{i,j_{i,2}};\zeta),
  \ldots,\mcc_i(t_{i,j_{i,k_i}};\zeta)$
  be the vertices in $V_i$, where $j_{i,1}<j_{i,2}<\ldots<j_{i,k_i}$.
  Let $\mb{x}=(\range[\dup]{x})$ and define the
  $(d+r)\times(d+r)$ determinant $F_V(\mb{x};\zeta)$
  as\footnote{\Fexample}:
  {\allowdisplaybreaks
  \begin{equation}\label{equ:F-def}
    \begin{aligned}
      F_V(\mb{x};\zeta)=(-1)^R\,\bigg|&
        \begin{smallmatrix}
          % {@{\hsep}c@{\hsep}c@{\hsep}c@{\hsep}c@{\hsep}c@{\hsep}c@{\hsep}}
          % c@{\hsep}c@{\hsep}c@{\hsep}c@{\hsep}c@{\hsep}}
          1&1&1&\cdots&1&1\\[3pt]
          \mb{x}&\mcc_1(t_{1,j_{1,1}};\zeta)&\mcc_1(t_{1,j_{1,1}}^\epsilon;\zeta)
          &\cdots
          &\mcc_1(t_{1,j_{1,k_1}};\zeta)&\mcc_1(t_{1,j_{1,k_1}}^\epsilon;\zeta)
        \end{smallmatrix}\\[6pt]
        &
        \begin{smallmatrix}%{@{\hsep}c@{\hsep}c@{\hsep}c@{\hsep}c@{\hsep}c@{\hsep}c@{\hsep}}
          &1&1&\cdots&1&1\\[3pt]
          &\mcc_2(t_{2,j_{2,1}};\zeta)&\mcc_2(t_{2,j_{2,1}}^\epsilon;\zeta)&\cdots
          &\mcc_2(t_{2,j_{2,k_2}};\zeta)&\mcc_2(t_{2,j_{2,k_2}}^\epsilon;\zeta)
        \end{smallmatrix}\\[6pt]
        &
        \begin{smallmatrix}
          % {@{\hsep}c@{\hsep}c@{\hsep}c@{\hsep}c@{\hsep}c@{\hsep}c@{\hsep}}
          % c@{\hsep}c@{\hsep}}
          \cdots&1&1&\cdots&1&1\\[3pt]
          \cdots&\mcc_{r-1}(t_{r-1,j_{r-1,1}};\zeta)&\mcc_{r-1}(t_{r-1,j_{r-1,1}}^\epsilon;\zeta)&\cdots
          &\mcc_{r-1}(t_{r-1,j_{r-1,k_{r-1}}};\zeta)&\mcc_{r-1}(t_{r-1,j_{r-1,k_{r-1}}}^\epsilon;\zeta)
        \end{smallmatrix}\\[6pt]
        &
        \begin{smallmatrix}
          % {@{\hsep}c@{\hsep}c@{\hsep}c@{\hsep}c@{\hsep}c@{\hsep}c@{\hsep}}
          &1&1&\cdots&1&1\\[3pt]
          &\mcc_r(t_{r,j_{r,1}};\zeta)&\mcc_r(t_{r,j_{r,1}}^\epsilon;\zeta)
          &\cdots&\mcc_r(t_{r,j_{r,k_r}};\zeta)&\mcc_r(t_{r,j_{r,k_r}}^\epsilon;\zeta)\\
        \end{smallmatrix}\\[6pt]
        &
        \begin{smallmatrix}
          % {@{\hsep}c@{\hsep}c@{\hsep}c@{\hsep}c@{\hsep}c@{\hsep}c@{\hsep}}
          &1&1&\cdots&1\\[3pt]
          &\mcc_r(M;\zeta)&\mcc_r(2M;\zeta)&\cdots&\mcc_r((\dup-2k)M;\zeta)
        \end{smallmatrix}
      \bigg|,
      \end{aligned}
  \end{equation}}%
  where $R=\frac{r(r-1)}{2}$. As for the determinant $H_U(\mb{x})$,
  we can alternatively describe $F_V(\mb{x};\zeta)$ as follows:
  \begin{enumerate}
  \item
    The first column of $F_V(\mb{x};\zeta)$ is $\binom{1}{\mb{x}}$.
  \item
    For $i$ ranging from $1$ to $r$, and for $\lambda$ ranging from $1$ to
    $k_i$, the next $k_i$ \emph{pairs of columns} of
    $F_V(\mb{x})$ are $\binom{1}{\mcc_i(t_{i,j_{i,\lambda}};\zeta)}$ and
    $\binom{1}{\mcc_i(t_{i,j_{i,\lambda}}^\epsilon;\zeta)}$.
  \item
    For $\lambda$ ranging from $1$ to $\dup-2k$,
    the last $\dup-2k$ columns of $F_V(\mb{x};\zeta)$ are
    $\binom{1}{\mcc_r(\lambda M;\zeta)}$. Notice that if $k=\lexp{\dup}$ and
    $\dup$ is even, this last category of columns of $F_V(\mb{x};\zeta)$ does
    not exist.
  \end{enumerate}

  The equation $F_V(\mb{x};\zeta)=0$ is the equation of a
  hyperplane in $\reals^{\dup}$ that passes through the points in
  $V$. We are going to show that for all vertices
  $\mbv\in\VV\sm{}V$, we have $F_V(\mbv;\zeta)>0$ for
  sufficiently small $\zeta$.

  Indeed, choose some $\mbv\in{}\VV\sm{}V$, and suppose that
  $\mbv\in{}\VV_i\sm{}V$. Then $\mbv$ is of the form
  $\mbv=\mcc_i(t_{i,\lambda};\zeta)$, $\zeta>0$, for some
  $\lambda\nin\{j_{i,1},j_{i,2},\ldots,j_{i,k_i}\}$.
  Let $\mbu^\star=\mcc_i(t_{i,\lambda})=\mcc_i(t_{i,\lambda};0)$.
  In more geometric terms, we define $\mbu^\star$ to be the projection of
  $\mbv$ on the $d$-flat $\reals^{r-1}\times{}F_i$ of
  $\reals^{r-1}\times\reals^{d}$, or, equivalently, $\mbu^\star$ is
  the (unperturbed) vertex in $\UU^\star\sm{}U^\star$ that corresponds to
  $\mbv$, where $U^\star$ stands for the set of (unperturbed)
  vertices of $\UU^\star$ that correspond to the vertices in $V$.
  Observe that $F_V(\mbv;\zeta)$ is a polynomial function in
  $\zeta$, and thus it is continuous with respect to $\zeta$ for any
  $\zeta\in\reals$. This implies that
  \begin{equation}\label{equ:limFv}
    \lim_{\zeta\to{}0^+}F_V(\mbv;\zeta)=F_{U^\star}(\mbu^\star;0)
    =H_{U^\star}(\mbu^\star),
  \end{equation}
  where we used the fact that $\lim_{\zeta\to{}0^+}\mbv=\mbu^\star$,
  and observed that $F_{U^\star}(\mbu^\star;0)=H_{U^\star}(\mbu^\star)$, 
  where $H_{U^\star}(\mb{x})$ is the determinant in relation
  \eqref{equ:H-def} in the proof of Lemma \ref{lem:Qstar}, for
  $\tau\sub\tau^\star$.
  Since $H_{U^\star}(\mbu^\star)>0$ (recall that we
  have chosen $\tau$ to be equal to $\tau^\star$), we conclude, from
  \eqref{equ:limFv}, that there exists some $\zeta_0>0$ that
  depends on $\mbv$ and $V$, such that for all
  $\zeta\in(0,\zeta_0)$, $F_V(\mbv;\zeta)>0$.

  Since for each $k$ with $r\le{}k\le{}\lexp{\dup}$, the number
  of \spansubs $V$ of $\VV$ of size $k$ is $\ub{k}$,
  while for each such subset $V$ we need to consider the
  $(\sum_{i=1}^{r}n_i-k)$ vertices in $\VV\sm{}V$, it suffices to
  consider a positive value $\zeta^\lozenge$ for $\zeta$ that is small
  enough, so that all
  \begin{equation*}
    \sum_{k=r}^{\lexp{\dup}}\Big(\sum_{i=1}^{r}n_i-k\Big)\,\ub{k}
  \end{equation*}
  possible determinants $F_V(\mbv;\zeta)$ are strictly positive. 
  Hence, for $\zeta\sub\zeta^\lozenge$, we have that \emph{for each}
  \spansub $V$ of $\VV$ 
  the equation $F_{V}(\mb{x};\zeta^\lozenge)=0$, $\mb{x}\in\reals^{\dup}$,
  is the equation of a supporting hyperplane for $\pP$ passing
  through the vertices of $V$, and those only. In other words,
  every \spansub $V$ of $\VV$, where
  $|V|=k$ and $r\le{}k\le{}\lexp{\dup}$,
  defines a $(k-1)$-face of $\pP$, which means that 
  $f_{k-1}(\wW_{\pP})=\ub{k}$, for all $r\le{}k\le{}\lexp{\dup}$.
\end{proof}

From Lemma \ref{lem:Pstar}, in conjunction with \eqref{equ:fkW},
we immediately arrive at the following theorem, which
states the main result of this paper.

\begin{theorem}\label{thm:lb}
  Let $d\ge{}3$ and $2\le{}r\le{}d-1$.
  There exist $r$ neighborly $d$-polytopes $\range{P}$ in $\reals^d$,
  with $\range{n}$ vertices, respectively, such that, for all
  $0\le{}k\le{}\lexp{d+r-1}-r$:
  \begin{equation*}
    f_{k}(\MS)=\ub{k+r}=
    \sum_{\substack{1\le{}s_i\le{}n_i\\s_1+\ldots+s_r=k+r}}
    \prod_{i=1}^r\binom{n_i}{s_i}.
  \end{equation*}
\end{theorem}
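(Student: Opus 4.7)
The plan is to obtain Theorem \ref{thm:lb} as a direct synthesis of Claim \ref{claim:neighborly}, Lemma \ref{lem:Pstar}, and the Cayley-trick identity \eqref{equ:fkW}; essentially all the real work has been done in establishing the two preceding lemmas, and what remains is to assemble the pieces and match ranges. The polytopes $\range{P}$ witnessing the lower bound will be exactly the $d$-polytopes constructed right before Lemma \ref{lem:Pstar}: each $P_i$ is the convex hull of the vertex set $\VV_i$ obtained by evaluating the perturbed moment-like curve $\mc_i(t;\zeta)$ at $t=\alpha_{i,j}(\tau^\star)^{\nu_i}$ for $1\le j\le n_i$, with $\tau^\star$ supplied by Lemma \ref{lem:Qstar} and with the perturbation parameter $\zeta$ set to the value $\zeta^\lozenge$ supplied by Lemma \ref{lem:Pstar}.

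First I would verify neighborliness. Since $\zeta^\lozenge>0$, Claim \ref{claim:neighborly} applies at $\zeta=\zeta^\lozenge$, showing that each $P_i$ is indeed a neighborly $d$-polytope on $n_i$ vertices. Next I would read off the face numbers of the Minkowski sum. Lemma \ref{lem:Pstar}, evaluated at $\zeta=\zeta^\lozenge$, gives
\begin{equation*}
  f_{k-1}(\wW_\pP)=\ub{k},\qquad r\le k\le \ltexp{\dup},
\end{equation*}
where $\wW_\pP$ is the set of faces of the Cayley polytope $\pP$ meeting the $d$-flat $\Wavg$. The Cayley-trick bijection recorded in \eqref{equ:fkW} equates $f_{k-1}(\wW_\pP)$ with $f_{k-r}(\MS)$ for $r\le k\le d+r-1$, so on the overlap $r\le k\le \ltexp{\dup}$ I obtain $f_{k-r}(\MS)=\ub{k}$. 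Reindexing by $k\mapsto k+r$ converts this to $f_k(\MS)=\ub{k+r}$ for $0\le k\le \ltexp{\dup}-r$, which is precisely the asserted equality.

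The main (and really the only) obstacle is bookkeeping rather than mathematics: I should check that the shifted range in the theorem statement matches the natural range from Lemma \ref{lem:Pstar}, that $r\le \ltexp{\dup}$ under the hypothesis $2\le r\le d-1$ so that the range is non-empty, and that the $d$-flat $\Wavg$ used throughout Section \ref{sec:lb} is the same $d$-flat $W(\lambdaavg)$ appearing in Lemma \ref{lem:cayley-embedding} and in \eqref{equ:fkW}. All three checks are immediate from the definition \eqref{equ:Wflat-def} of $\Wavg$ and from the elementary inequality $d+r-1\ge 2r$ when $r\le d-1$. With these consistency checks in place, Theorem \ref{thm:lb} follows as a one-line corollary of Lemma \ref{lem:Pstar} and \eqref{equ:fkW}.
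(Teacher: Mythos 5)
Your proposal is correct and follows exactly the same route as the paper, which states Theorem \ref{thm:lb} as an immediate consequence of Lemma \ref{lem:Pstar} (for the face counts) and relation \eqref{equ:fkW} (for the Cayley-trick bijection), with Claim \ref{claim:neighborly} supplying neighborliness. Your careful bookkeeping — checking the overlap of ranges, the reindexing $k \mapsto k+r$, the non-emptiness condition $2r \le d+r-1$ when $r \le d-1$, and the consistency of $\Wavg$ with $W(\lambdaavg)$ — simply unpacks what the paper compresses into a single sentence.
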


%As a final remark, notice that Theorem \ref{thm:lb} contains, as a
%special case, Fukuda and Weibel's tight bound on the number of
%vertices of the Minkowski sum of $r$ $d$-polytopes, for $d\ge{}3$ and
%$2\le{}r\le{}d-1$ (cf. \cite[Theorem 2]{fw-fmacp-07}).

\section{Discussion}
\label{sec:concl}

In this paper we have considered the problem of evaluating the maximum
number of $k$-faces of the Minkowski sum of $r$ $d$-polytopes in
$\reals^d$. We have shown, with the aid of the Cayley trick for
Minkowski sums, that the trivial upper bound proved by
Fukuda and Weibel \cite[Lemma 8]{fw-fmacp-07} is attainable for any
$d\ge{}3$, $2\le{}r\le{}d-1$ and $0\le{}k\le{}\lexp{d+r-1}-r$, which
is a significant and non-trivial extension of their previous
tightness result (cf. \cite[Theorem 4]{fw-fmacp-07}). A direct
corollary of our lower bounds is that the complexity of the Minkowski
sum of $r$ $n$-vertex $d$-polytopes is in $\Theta(n^{\lexp{d+r-1}})$,
for any fixed $d\ge{}3$ and $2\le{}r\le{}d-1$.
We conjecture that the lower bound construction presented in this
paper, gives, in fact, the maximum possible number of $k$-faces for
the Minkowski sum of $r$ $d$-polytopes for any
$d\ge{}3$, $2\le{}r\le{}d-1$, and for all $0\le{}k\le{}d-1$. Our
conjecture has been positively asserted for the case of two
$d$-polytopes (cf. \cite{kt-mnfms-11b,kt-mnfms-12}).

Given the results in this paper, as well as the tight bounds in
\cite{kt-mnfms-11b,kt-mnfms-12} and \cite{w-mfmsl-11},
the obvious remaining open problem is to devise a tight expression for
the maximum number of $k$-faces of the Minkowski sum of $r$
$d$-polytopes for: (i) $d\ge{}4$, $3\le{}r\le{}d-1$ and
$\lexp{d+r-1}-r<k\le{}d-1$, and (ii) for $r\ge{}d\ge{}3$ and
$1\le{}k\le{}d-1$.
Another relevant open problem is to express the maximum number of
$k$-faces of the Minkowski sum of $r$ $d$-polytopes as a function of
the number of facets of the polytopes. Results in this direction are
known for 2- and 3-polytopes only (cf. \cite{w-mspcc-07},
\cite{fhw-emcms-09}).
%We would like to derive such expressions for any $d\ge{}4$ and any
%number of summands.

\section*{Acknowledgements}
The authors would like to thank Panagiotis D. Kaklis and Christos
Konaxis for discussions regarding earlier versions of this paper.
The work in this paper has been partially supported by the
FP7-REGPOT-2009-1 project ``Ar\-chi\-me\-des Center for Modeling, Analysis
and Computation''.

\phantomsection
\addcontentsline{toc}{section}{References}

\bibliographystyle{plain}
\bibliography{minksum}

\phantomsection
\addcontentsline{toc}{section}{Appendix}

%\clearpage

\appendix

%%%%%%%%%%%%%%%%%%%%%%%%%%%%%%%%%%%%%%%%%%%%%%%%%%%%%%%%%%%%%%%%%%%%%
%%%%%%%%%%%%%%%%%%%%%%%%%%%%%%%%%%%%%%%%%%%%%%%%%%%%%%%%%%%%%%%%%%%%%

\section{Proof of Lemma 2}
\label{app:signdet}

We start by introducing what is known as
\emph{Laplace's Expansion Theorem} for determinants
(see \cite{g-tm-60,hk-la-71} for details and proofs).
Consider a $n\times{}n$ matrix $A$.
Let $\mb{r}=(r_1,r_2,\ldots,r_k)$, be a vector of $k$ row indices for
$A$, where $1\le{}k<n$ and $1\le{}r_1<r_2<\ldots<r_k\le{}n$.
Let $\mb{c}=(c_1,c_2,\ldots,c_k)$ be a vector of $k$ column indices for
$A$, where $1\le{}k<n$ and $1\le{}c_1<c_2<\ldots<c_k\le{}n$. We
denote by $S(A;\mb{r},\mb{c})$ the $k\times{}k$ submatrix of $A$
constructed by keeping the entries of $A$ that belong to a row in
$\mb{r}$ and a column in $\mb{c}$.
The \emph{complementary submatrix} for $S(A;\mb{r},\mb{c})$, denoted
by $\compl(A;\mb{r},\mb{c})$, is the $(n-k)\times(n-k)$ submatrix of
$A$ constructed by removing the rows and columns of $A$ in $\mb{r}$
and $\mb{c}$, respectively. Then, the determinant of $A$ can be
computed by expanding in terms of the $k$ columns of $A$ in
$\mb{c}$ according to the following theorem.

\begin{theorem}[\textbf{Laplace's Expansion Theorem}]\label{thm:LET}
  Let $A$ be a $n\times{}n$ matrix. Let
  $\mb{c}=(c_1,c_2,\ldots,c_k)$ be a vector of $k$ column indices for
  $A$, where $1\le{}k<n$ and $1\le{}c_1<c_2<\ldots<c_k\le{}n$. Then:
  \begin{equation}
    \det(A)=\sum_{\mb{r}}(-1)^{|\mb{r}|+|\mb{c}|}
    \det(S(A;\mb{r},\mb{c}))\det(\compl(A;\mb{r},\mb{c})),
  \end{equation}
  where $|\mb{r}|=r_1+r_2+\ldots+r_k$,
  $|\mb{c}|=c_1+c_2+\ldots+c_k$, and the summation is taken over all
  row vectors $\mb{r}=(r_1,r_2,\ldots,r_k)$ of $k$ row indices for
  $A$, where $1\le{}r_1<r_2<\ldots<r_k\le{}n$.
%  $k$-tuples $\mb{r}=(r_1,r_2,\ldots,r_k)$ for which
%  $1\le{}r_1<r_2<\ldots<r_k\le{}n$.
\end{theorem}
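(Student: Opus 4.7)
The plan is to derive Laplace's Expansion Theorem directly from the Leibniz formula
\[
\det(A) = \sum_{\sigma\in S_n} \text{sgn}(\sigma)\prod_{i=1}^n a_{i,\sigma(i)},
\]
by partitioning the sum over $\sigma$ according to which rows get sent into the selected column set $\mb{c}$. Concretely, for each $\sigma\in S_n$ I would let $\mb{r}(\sigma)=(r_1<\cdots<r_k)$ be the sorted list of indices $i$ with $\sigma(i)\in\{c_1,\ldots,c_k\}$; grouping the Leibniz sum by the value of $\mb{r}(\sigma)$ produces the outer sum over increasing $k$-tuples $\mb{r}$ that appears in the claim.

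For a fixed admissible $\mb{r}$, every $\sigma$ with $\mb{r}(\sigma)=\mb{r}$ decomposes uniquely as a pair of bijections $\pi\colon\mb{r}\to\mb{c}$ and $\rho\colon\overline{\mb{r}}\to\overline{\mb{c}}$ (where $\overline{\,\cdot\,}$ denotes the sorted complement inside $\{1,\ldots,n\}$). Via the order-preserving identifications $\mb{r}\leftrightarrow\{1,\ldots,k\}$, $\mb{c}\leftrightarrow\{1,\ldots,k\}$, $\overline{\mb{r}}\leftrightarrow\{1,\ldots,n-k\}$, and $\overline{\mb{c}}\leftrightarrow\{1,\ldots,n-k\}$, the pair $(\pi,\rho)$ becomes a genuine element of $S_k\times S_{n-k}$. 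The Leibniz product then factors as
\[
\prod_{i=1}^n a_{i,\sigma(i)} = \Bigl(\prod_{j=1}^{k} [S(A;\mb{r},\mb{c})]_{j,\pi(j)}\Bigr)\Bigl(\prod_{j=1}^{n-k} [\compl(A;\mb{r},\mb{c})]_{j,\rho(j)}\Bigr),
\]
so the only remaining ingredient is a sign computation.

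The key (and main) technical step is to verify that $\text{sgn}(\sigma) = (-1)^{|\mb{r}|+|\mb{c}|}\text{sgn}(\pi)\text{sgn}(\rho)$. I would establish this by writing $\sigma$ as a composition: first a row permutation $\alpha_{\mb{r}}$ that brings rows $r_1,\ldots,r_k$ (in order) into positions $1,\ldots,k$ while preserving the relative order of the remaining rows, then the block permutation $\pi\oplus\rho$, then a column permutation $\alpha_{\mb{c}}^{-1}$ of the same shape. A direct adjacent-transposition count shows that moving $r_j$ from position $r_j$ to position $j$ (after the previous $j-1$ rows have already been shifted up) costs $r_j-j$ swaps, so $\text{sgn}(\alpha_{\mb{r}}) = (-1)^{\sum_{j=1}^k(r_j-j)} = (-1)^{|\mb{r}|-k(k+1)/2}$, and analogously for $\alpha_{\mb{c}}$. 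Multiplying the three signs and using that $k(k+1)$ is even yields exactly $(-1)^{|\mb{r}|+|\mb{c}|}\text{sgn}(\pi)\text{sgn}(\rho)$.

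Once the sign is factored out of the inner sum, the inner double sum $\sum_{\pi\in S_k}\sum_{\rho\in S_{n-k}}\text{sgn}(\pi)\text{sgn}(\rho)\prod[\cdot]\prod[\cdot]$ splits as a product of two Leibniz-formula determinants, namely $\det(S(A;\mb{r},\mb{c}))\,\det(\compl(A;\mb{r},\mb{c}))$, and assembling the outer sum over $\mb{r}$ gives the identity. The only non-routine obstacle is the sign bookkeeping of the previous paragraph; the combinatorial partition of $S_n$ and the factorization of the Leibniz monomial are formal.
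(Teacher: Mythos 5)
Your proof is correct, but note that the paper does not prove this statement at all: it is quoted as a classical result, with the reader referred to the cited linear-algebra texts for details and proofs. What you have written is essentially the standard textbook argument that those references contain: partition the Leibniz sum over $S_n$ according to the set of rows mapped into the chosen columns, factor each permutation as a row-reordering, a block permutation $\pi\oplus\rho$, and a column-reordering, and track the sign. Your sign bookkeeping is right: the adjacent-swap count $\sum_{j=1}^k(r_j-j)$ for the row reordering (and likewise for the columns) gives $\operatorname{sgn}(\sigma)=(-1)^{|\mb{r}|+|\mb{c}|-k(k+1)}\operatorname{sgn}(\pi)\operatorname{sgn}(\rho)=(-1)^{|\mb{r}|+|\mb{c}|}\operatorname{sgn}(\pi)\operatorname{sgn}(\rho)$, and the inner double sum then splits into $\det(S(A;\mb{r},\mb{c}))\det(\compl(A;\mb{r},\mb{c}))$ exactly as you say. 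So relative to the paper your contribution is to supply a self-contained proof where the authors were content to cite; there is no methodological divergence to weigh, since the paper only uses the theorem (in the proof of its Lemma 2) and never argues for it.
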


Given a vector of $n\ge{}2$ real numbers
$\mb{x}=(x_1,x_2,\ldots,x_n)$, the \emph{Vandermonde determinant}
$\VD{x}$ of $\mb{x}$ is the $n\times{}n$ determinant
\begin{equation}\label{equ:VD}
  \VD{x}=\left|
    \begin{array}{ccccc}
      1&1&\cdots&1\\
      x_1&x_2&\cdots&x_n\\
      x_1^2&x_2^2&\cdots&x_n^2\\
      \vdots&\vdots&&\vdots\\
      x_1^{n-1}&x_2^{n-1}&\cdots&x_n^{n-1}\\
    \end{array}
  \right|=\prod_{1\le{}i<j\le{}n}(x_j-x_i)
\end{equation}
From the above expression, it is readily seen that if the elements of
$\mb{x}$ are in strictly increasing order, then $\VD{x}>0$. A
generalization of the Vandermonde determinant is the generalized
Vandermonde determinant: if, in addition to $\mb{x}$, we specify a vector
of exponents $\mb{\mu}=(\mu_1,\mu_2,\ldots,\mu_n)$, where
we require that $0\le{}\mu_1<\mu_2<\ldots<\mu_n$, we can
define the \emph{generalized Vandermonde determinant}
$\GVD(\mb{x};\mb{\mu})$ as the $n\times{}n$ determinant:
\begin{equation*}
  \GVD(\mb{x};\mb{\mu})=\left|
    \begin{array}{ccccc}
      x_1^{\mu_1}&x_2^{\mu_1}&\cdots&x_n^{\mu_1}\\
      x_1^{\mu_2}&x_2^{\mu_2}&\cdots&x_n^{\mu_2}\\
      x_1^{\mu_3}&x_2^{\mu_3}&\cdots&x_n^{\mu_3}\\
      \vdots&\vdots&&\vdots\\
      x_1^{\mu_n}&x_2^{\mu_n}&\cdots&x_n^{\mu_n}\\
    \end{array}
  \right|.
\end{equation*}
It is a well-known fact (\eg see \cite{g-atm-05}) that, if the
elements of $\mb{x}$ are in strictly increasing order, then
$\GVD(\mb{x};\mb{\mu})>0$.

We now restate Lemma \ref{lem:sign-generic-det} and prove it.

\restorecounter{theorem}

\techlemma{\label{lem:sign-generic-det1}}

\begin{proof}
  We denote by $\md$ the matrix corresponding to the determinant
  $\mddet$. We are going to apply Laplace's Expansion
  Theorem to evaluate $\mddet$ in terms of the first $\kappa_1$
  columns of $\md$, then the next $\kappa_2$ columns of $\md$, then the
  next $\kappa_3$ columns of $\md$, and so on. Let $K_i=\sum_{j=1}^{i}\kappa_j$,
  $1\le{}i\le{}n$, and let $\mb{c}_i$, $1\le{}i\le{}n$, be the column
  vector corresponding to the columns
  $K_{i-1}+1$ to $K_i$ (by convention $K_0=0$), \ie
  $\mb{c}_i=(K_{i-1}+1,K_{i-1}+2,\ldots,K_i-1,K_i)$, $1\le{}i\le{}r$.
  By applying Laplace's Expansion Theorem we get:
  \begin{equation}\label{equ:DLET2}
    \begin{aligned}
      \mddet&=\sum_{(\mb{r}_1,\mb{r}_2,\ldots,\mb{r}_{n})}
      (-1)^{\sigma(\mb{r},\mb{c})+N}
      \prod_{i=1}^{n}\det(S(\md;\mb{r}_i,\mb{c}_i))
    \end{aligned}
  \end{equation}
  where $\sigma(\mb{r},\mb{c})$ is an expression that depends on
  $\mb{r}$ and $\mb{c}$, while $\det(S(\md;\mb{r}_i,\mb{c}_i))$
  is the $\kappa_i\times{}\kappa_i$
  submatrix of $\md$ formed by taking the elements of $\md$ that
  belong to the $\kappa_i$ columns in $\mb{c}_i$ and the $\kappa_i$ rows in
  $\mb{r}_i$.

  It is easy to verify that the above sum consists of
  $\prod_{i=1}^{n}\binom{K-K_{i-1}}{\kappa_i}$ terms.
  Observe that, among these terms:
  \begin{enumerate}
  \item
    all terms for which $\mb{r}_i$ contains the $j$-th row, where
    $1\le{}j\le{}2n$ and $j\nin\{i,n+i\}$ vanish, since the matrix
    $S(\md;\mb{r}_i,\mb{c}_i)$, consists of at least one row of zeros,
    and,
  \item
    all terms for which $\mb{r}_i$ does not contain the $i$-th or the
    $(n+i)$-th row vanish, since at least one of the matrices
    $S(\md;\mb{r}_j,\mb{c}_j)$, $j\ne{}i$, consists of a row of
    zeros.
  \end{enumerate}
  In other words, the row vector $\mb{r}_i$ has to be of the form
  $\mb{r}_i=(i,n+i,r_{i,3},r_{i,4},\ldots,r_{i,\kappa_i})$, while
  the remaining terms of the expansion are the
  $\prod_{i=1}^{n}\binom{K-2n-K_{i-1}}{\kappa_i-2}$ terms for which
  $\mb{r}_i=(i,n+i,r_{i,3},r_{i,4},\ldots,r_{i,\kappa_i})$, with
  $2n+1\le{}r_{i,3}<r_{i,4}<\ldots<r_{i,\kappa_i}\le{}K$. For any given such
  $\mb{r}_i$, we have that
  $\det(S(\md;\mb{r}_i,\mb{c}_i))$ is the $\kappa_i\times{}\kappa_i$
  generalized Vandermonde determinant
  $\GVD(\tau^{\beta_i}\mb{x}_i;\mb{r}_i-\mb{\alpha}_i)$, where
  $\mb{x}_i=(x_{i,1},x_{i,2},\ldots,x_{i,\kappa_i})$ and
  $\mb{\alpha}_i=(i,n+i-1,2n-1,2n-1,\ldots,2n-1)
  =i\me_1+(n+i-1)\me_2+(2n-1)\sum_{j=3}^{\kappa_i}\me_j$;
  here $\me_j$ stands for the vector whose elements vanish,
  except for the $j$-th element, which is equal to 1.

  We can, thus, simplify the expansion in \eqref{equ:DLET2} to get:
  \begin{equation}\label{equ:DLET2-simple}
    \mddet=
    \sum_{\substack{(\mb{r_1},\mb{r}_2,\ldots,\mb{r}_n)\\
        \mb{r}_\nu=(\nu,n+\nu,r_{\nu,3},\ldots,r_{\nu,\kappa_\nu})\\
        2n+1\le{}r_{\nu,3}<r_{\nu,4}<\ldots<r_{\nu,\kappa_\nu}\le{}K}}
    (-1)^{\sigma(\mb{r},\mb{c})+N}
    \prod_{i=1}^n\GVD(\tau^{\beta_i}\mb{x}_i;\mb{r}_i-\mb{\alpha}_i)
  \end{equation}
  Since
  \begin{equation*}
    \GVD(\tau^{\beta_i}\mb{x}_i;\mb{r}_i-\mb{\alpha}_i)
    =\tau^{\beta_i|\mb{r}_i|-\beta_i|\mb{\alpha}_i|}
    \,\GVD(\mb{x}_i;\mb{r}_i-\mb{\alpha}_i),
  \end{equation*}
  the expansion for $\mddet$ can be further rewritten as:
  \begin{equation}\label{equ:DLET2-simple2}
    \mddet=
    %\!\!\!\!\!\!\!\!\!\!\!\!
    \sum_{\substack{(\mb{r_1},\mb{r}_2,\ldots,\mb{r}_n)\\
        \mb{r}_\nu=(\nu,n+\nu,r_{\nu,3},\ldots,r_{\nu,\kappa_\nu})\\
        2n+1\le{}r_{\nu,3}<r_{\nu,4}<\ldots<r_{\nu,\kappa_\nu}\le{}K}}
    %\!\!\!\!\!\!\!\!\!\!\!\!
    (-1)^{\sigma(\mb{r},\mb{c})+N}
    \tau^{\sum_{i=1}^{n}\beta_i(|\mb{r}_i|-|\mb{\alpha}_i|)}\prod_{i=1}^n
    \,\GVD(\mb{x}_i;\mb{r}_i-\mb{\alpha}_i)
  \end{equation}

  Let $\mb{\rho}=(\mb{\rho}_1,\mb{\rho}_2,\ldots,\mb{\rho}_n)$
  be the row vector for which
  $\mb{\rho}_i=(i,n+i,2n+K_{i-1}-2(i-1)+1,2n+K_{i-1}-2(i-1)+2,\ldots,2n+K_i-2i)
  =(i,n+i,2(n-i)+K_{i-1}+3,2(n-i)+K_{i-1}+4,\ldots,2(n-i)+K_i)$,
  $1\le{}i\le{}n$. We claim that the minimum exponent for $\tau$ in
  \eqref{equ:DLET2-simple2} is attained when
  $\mb{r}_i\equiv\mb{\rho}_i$, for all $1\le{}i\le{}n$.
  Suppose, on the contrary, that the minimum exponent for $\tau$ is
  attained for the row vector
  $\mb{r}=(\mb{r}_1,\mb{r}_2,\ldots,\mb{r}_n)\ne\mb{\rho}$.
  Let $i_1$, $1\le{}i_1\le{}n$, be the minimum index $i$ for which
  $\mb{r}_i\ne\mb{\rho}_i$, and let $r_{i_1,\ell_1}$,
  $3\le\ell_1\le{}\kappa_{i_1}$, be the first element of $\mb{r}_{i_1}$ that
  differs from the corresponding element of $\mb{\rho}_{i_1}$, \ie
  $r_{i_1,j}=\rho_{i_1,j}$, $1\le{}j\le{}\ell_1-1$, and 
  $r_{i_1,\ell_1}\ne\rho_{i_1,\ell_1}$.
  Since both $\mb{r}$ and $\mb{\rho}$ are row vectors, there exists
  another index $i_2$ with $1\le{}i_1<i_2\le{}n$, such that
  $\mb{r}_{i_2}$ contains $\rho_{i_1,\ell_1}$, and let $\ell_2$,
  $3\le{}\ell_2\le{}\kappa_{i_2}$, be the index of $\rho_{i_1,\ell_1}$ in
  $\mb{r}_{i_2}$, i.e., $r_{i_2,\ell_2}=\rho_{i_1,\ell_1}$. Define the
  row vector
  $\mb{\bar{r}}=(\mb{\bar{r}}_1,\mb{\bar{r}}_2,\ldots,\mb{\bar{r}}_n)$
  as follows.
  Let $\mb{\bar{r}}_i=\mb{r}_i$, for all $i\ne{}i_1,i_2$,
  $\mb{\bar{r}}_{i_1}$ has the same elements as $\mb{r}_{i_1}$, except
  for its $\ell_1$-th element, which is replaced by
  $\rho_{i_1,\ell_1}$ (i.e.,
  $\bar{r}_{i_1,\ell_1}=\rho_{i_1,\ell_1}<r_{i_1,\ell_1}$), whereas
  $\mb{\bar{r}}_{i_2}$ has the same elements as $\mb{r}_{i_2}$, except
  for its $\ell_2$-th element, which is replaced by $r_{i_1,\ell_1}$
  (i.e.,
  $\bar{r}_{i_2,\ell_2}=r_{i_1,\ell_1}>\rho_{i_1,\ell_1}=r_{i_2,\ell_2}$). This
  implies that $|\mb{\bar{r}}_i|=|\mb{r}_i|$, for all $i\ne{}i_1,i_2$,
  whereas
  \begin{align*}
    |\mb{\bar{r}}_{i_1}|&=|\mb{r}_{i_1}|-r_{i_1,\ell_1}+\rho_{i_1,\ell_1},\\
    |\mb{\bar{r}}_{i_2}|&=|\mb{r}_{i_2}|-r_{i_2,\ell_2}+r_{i_1,\ell_1}
    =|\mb{r}_{i_2}|-\rho_{i_1,\ell_1}+r_{i_1,\ell_1}.
  \end{align*}
  Hence we get:
  {\allowdisplaybreaks
  \begin{align*}
    \sum_{i=1}^n\beta_i|\mb{\bar{r}}_i|
    &=\sum_{\substack{i=1\\i\ne{}i_1,i_2}}^n\beta_i|\mb{\bar{r}}_i|
    +\beta_{i_1}|\mb{\bar{r}}_{i_1}|+\beta_{i_2}|\mb{\bar{r}}_{i_2}|\\
    &=\sum_{\substack{i=1\\i\ne{}i_1,i_2}}^n\beta_i|\mb{r}_i|
    +\beta_{i_1}(|\mb{r}_{i_1}|-r_{i_1,\ell_1}+\rho_{i_1,\ell_1})
    +\beta_{i_2}(|\mb{r}_{i_2}|-\rho_{i_1,\ell_1}+r_{i_1,\ell_1})\\
    &=\sum_{i=1}^n\beta_i|\mb{r}_i|
    +(\beta_{i_1}-\beta_{i_2})(\rho_{i_1,\ell_1}-r_{i_1,\ell_1})\\
    &<\sum_{i=1}^n\beta_i|\mb{r}_i|,
  \end{align*}}%
  where we used that $\rho_{i_1,\ell_1}<r_{i_1,\ell_1}$ and
  $\beta_{i_1}>\beta_{i_2}$ (since $i_1>i_2$). This, however,
  contradicts the minimality property of $\mb{r}$, which means that
  our assumption that $\mb{r}\ne\mb{\rho}$ is false.

  For $\mb{r}\equiv\mb{\rho}$, it is easy to verify that
  \begin{equation}\label{equ:sigma-expr}
    \sigma(\mb{\rho},\mb{c})=\sum_{i=1}^{n-1}\sum_{j=1}^{\kappa_i}{j}
    +\sum_{i=1}^{n-1}\left[1+(n+2-i)+\sum_{j=1}^{\kappa_i-2}[2(n+1-i)+j]\right].
  \end{equation}
  To see why this expression holds, consider expanding $\mddet$ along
  the columns in $\mb{c}_1$ and the rows in $\mb{\rho}_1$. This
  contributes a term of $\sum_{j=1}^{\kappa_1}{j}$ to
  $\sigma(\mb{\rho},\mb{c})$, corresponding to $|\mb{c}_1|$, and a
  term $1+(n+1)+\sum_{j=1}^{\kappa_1-2}(2n+j)$ corresponding to
  $|\mb{\rho}_1|$. The remaining complementary
  $(K-\kappa_1)\times(K-\kappa_1)$ submatrix is then expanded along
  the column vector corresponding to its first $\kappa_2$ columns, and
  the row vector corresponding to its first row, its $n$-th row, as
  well as rows $2(n-1)+j$, with $1\le{}j\le{}\kappa_2-2$. This
  contributes a term $\sum_{j=1}^{\kappa_2}j$ to $\sigma(\mb{\rho},\mb{c})$
  corresponding to its first $\kappa_2$ columns, and a term of
  $1+n+\sum_{j=1}^{\kappa_2-2}(2(n-1)+j)$ corresponding to its
  rows. At the $i$-th step of this procedure, which is performed
  $(n-1)$ times, the remaining determinant corresponds to a
  $(K-K_{i-1})\times(K-K_{i-1})$ submatrix of $\md$, which is expanded
  with respect to its first $\kappa_i$ columns, and with respect to its
  first and $(n+2-i)$-th row, as well as its rows $2(n+1-i)+j$, where
  $1\le{}j\le{}\kappa_i-2$. Hence the contribution to
  $\sigma(\mb{\rho},\mb{c})$ from the columns in $\mb{c}_i$ is
  $\sum_{j=1}^{\kappa_i}j$, while the contribution from the rows in
  $\mb{\rho}_i$ is $1+(n+2-i)+\sum_{j=1}^{\kappa_i-2}[2(n+1-i)+j]$.
  Expanding $\sigma(\mb{\rho},\mb{c})$ from \eqref{equ:sigma-expr},
  and gathering terms appropriately, we get:
  {\allowdisplaybreaks
  \begin{align*}
    \sigma(\mb{\rho},\mb{c})&=\sum_{i=1}^{n-1}\sum_{j=1}^{\kappa_i}{j}
    +\sum_{i=1}^{n-1}\left[1+(n+2-i)+\sum_{j=1}^{\kappa_i-2}[2(n+1-i)+j]\right]\\
    &=\sum_{i=1}^{n-1}\sum_{j=1}^{\kappa_i}{j}
    +\sum_{i=1}^{n-1}(n+3-i)
    +2\sum_{i=1}^{n-1}\sum_{j=1}^{\kappa_i-2}(n+1-i)
    +\sum_{i=1}^{n-1}\sum_{j=1}^{\kappa_i-2}j\\
    &=\sum_{i=1}^{n-1}\sum_{j=1}^{\kappa_i}{j}
    +3(n-1)+\sum_{i=1}^{n-1}(n-i)
    +2\sum_{i=1}^{n-1}\sum_{j=1}^{\kappa_i-2}(n+1-i)
    +\sum_{i=1}^{n-1}\sum_{j=1}^{\kappa_i}j-\sum_{i=1}^{n-1}[(\kappa_i-1)+\kappa_i]\\
    &=2\sum_{i=1}^{n-1}\sum_{j=1}^{\kappa_i}{j}
    +3(n-1)+\sum_{i=1}^{n-1}i
    +2\sum_{i=1}^{n-1}\sum_{j=1}^{\kappa_i-2}(n+1-i)
    -2\sum_{i=1}^{n-1}\kappa_i+\sum_{i=1}^{n-1}1\\
    &=2\sum_{i=1}^{n-1}\sum_{j=1}^{\kappa_i}{j}
    +4(n-1)+\frac{n(n-1)}{2}
    +2\sum_{i=1}^{n-1}\sum_{j=1}^{\kappa_i-2}(n+1-i)
    -2\sum_{i=1}^{n-1}\kappa_i\\
    &=2\left[\sum_{i=1}^{n-1}\sum_{j=1}^{\kappa_i}{j}
    +2(n-1)+\sum_{i=1}^{n-1}\sum_{j=1}^{\kappa_i-2}(n+1-i)
    -\sum_{i=1}^{n-1}\kappa_i\right]+\frac{n(n-1)}{2}.
  \end{align*}}%
  This means that
  $(-1)^{\sigma(\mb{\rho},\mb{c})+N}=(-1)^{n(n-1)}=1$, since $n(n-1)$
  is always even. Therefore, relation \eqref{equ:DLET2-simple2} gives:
  \begin{equation}\label{equ:DLET2-asymp}
    \mddet=
    \tau^{\sum_{i=1}^{n}\beta_i(|\mb{\rho}_i|-|\mb{\alpha}_i|)}
    \prod_{i=1}^n\,\GVD(\mb{x}_i;\mb{\rho}_i-\mb{\alpha}_i)
    +O(\tau^{\sum_{i=1}^{n}\beta_i(|\mb{\rho}_i|-|\mb{\alpha}_i|)+1}).\\
  \end{equation}
  From relation \eqref{equ:DLET2-asymp} we immediately deduce that:
  \begin{equation}
    \lim_{\tau\to{}0^+}\frac{\mddet}{\tau^{\sum_{i=1}^{n}\beta_i(|\mb{\rho}_i|-|\mb{\alpha}_i|)}}
    =\prod_{i=1}^n\,\GVD(\mb{x}_i;\mb{\rho}_i-\mb{\alpha}_i)
  \end{equation}
  which establishes the claim of the lemma, since
  $\GVD(\mb{x}_i;\mb{\rho}_i-\mb{\alpha}_i)$ is strictly positive, for
  all $1\le{}i\le{}n$.
\end{proof}

\hide{
  \begin{equation*}
    \sigma(\mb{\rho},\mb{c})
    =2\sum_{j=1}^{n-1}\sum_{i=1}^{\kappa_j}{i}
    +2\sum_{j=1}^{n-1}\sum_{i=1}^{\kappa_j-2}(n+1-j)
    -2\sum_{j=1}^{n-1}\kappa_j+4(n-1)
    +\frac{n(n-1)}{2}.
  \end{equation*}
  We are going to show the above relation by induction on $n$.
  For $n=2$, $\sigma(\mb{\rho},\mb{c})=|\mb{c}_1|+|\mb{\rho}_1|$, and
  since $\mb{c}_1=(1,2,\ldots,\kappa_1)$ and
  $\mb{\rho}_1=(1,3,5,6,\ldots,\kappa_1+2)$, we have:
  \begin{align*}
    \sigma(\mb{\rho},\mb{c})&=|\mb{c}_1|+|\mb{\rho}_1|\\
    &=\sum_{i=1}^{\kappa_1}i+\left[1+3+\sum_{i=1}^{\kappa_1-2}(4+i)\right]\\
    &=\sum_{i=1}^{\kappa_1}i+\sum_{i=1}^{\kappa_1-2}4+\sum_{i=1}^{\kappa_1-2}i+4\\
    &=\sum_{i=1}^{\kappa_1}i+2\sum_{i=1}^{\kappa_1-2}(2+1-1)
    +\left[\sum_{i=1}^{\kappa_1}i-(\kappa_1-1)-\kappa_1\right]+4(2-1)\\
    &=2\sum_{i=1}^{\kappa_1}i+2\sum_{i=1}^{\kappa_1-2}(2+1-1)
    -2\kappa_1+1+4(2-1)\\
    &=2\sum_{j=1}^{2-1}\sum_{i=1}^{\kappa_1}i
    +2\sum_{j=1}^{2-1}\sum_{i=1}^{\kappa_1-2}(2+1-j)
    -2\sum_{j=1}^{2-1}\kappa_j+\frac{2(2-1)}{2}+4(2-1)\\
    &=2\sum_{j=1}^{n-1}\sum_{i=1}^{\kappa_1}i
    +2\sum_{j=1}^{n-1}\sum_{i=1}^{\kappa_1-2}(n+1-j)
    -2\sum_{j=1}^{n-1}\kappa_j+\frac{n(n-1)}{2}+4(n-1).
  \end{align*}
  Suppose that our claim holds for $n=\nu$. We will show it for $n=\nu+1$.
}

%%%%%%%%%%%%%%%%%%%%%%%%%%%%%%%%%%%%%%%%%%%%%%%%%%%%%%%%%%%%%%%%%%%%%
%%%%%%%%%%%%%%%%%%%%%%%%%%%%%%%%%%%%%%%%%%%%%%%%%%%%%%%%%%%%%%%%%%%%%

%\clearpage

%\section{The polytope \texorpdfstring{$P_i$}{Pi} is a neighborly
%  \texorpdfstring{$d$}{d}-polytope}
%\label{app:neighborly}

%In this section we state and prove the following:

\end{document}